\def\dd{\mathrm d}
\def\wt{\mathrm {wt}}
\def\Spec{\mathrm {Spec\ }}
\def\CC{{\mathbb C}}
\def\NN{{\mathbb N}}
\def\RR{{\mathbb R}}
\def\ZZ{{\mathbb Z}}
\def\QQ{{\mathbb Q}}
\def\WWf{{{\mathcal W}^{}}}
\def\fZ#1{{f^{(Z)}_{#1}}}
\def\wpg#1{{{\wp}^{}_{#1}}}
\def\SS{{\mathcal S}}
\def\WW{{\mathcal W}}
\def\cJ{\mathcal{J}}
\def\cO{\mathcal{O}}
\def\cL{\mathcal{L}}
\def\tP{\tilde{P}}
\def\Ng{{N}^{(g)}}
\def\NHf{{N_{H^1}^{}}}
\def\omegagp#1{{\omega^{{\prime}}_{#1}}}
\def\omegagpp#1{{\omega^{{\prime\prime}}_{#1}}}
\def\omegafp#1{{\omega^{{\prime}}_{#1}}}
\def\omegafpp#1{{\omega^{{\prime\prime}}_{#1}}}
\def\etagp#1{{\eta^{{\prime}}_{#1}}}
\def\etagpp#1{{\eta^{{\prime\prime}}_{#1}}}
\def\phiHf#1{{{\phi_{H^1}^{}}_{#1}}}
\def\nuIg#1{{\nu^{{I}}_{#1}}}
\def\nuIf#1{{\nu^{{I}}_{#1}}}
\def\nuIIg#1{{\nu^{{II}}_{#1}}}
\def\nuIIf#1{{\nu^{{II}}_{#1}}}
\def\muc{{b}}
\def\lone#1{{\lambda^{(1)}_{#1}}}
\def\ltwo#1{{\lambda^{(2)}_{#1}}}
\def\uab{{\hat{u}}}
\def\hu{{\hat{u}}}
\def\mug#1{{\mu_{#1}}}
\def\Sigmaf{{\Sigma^{{}}}}
\def\Omegaf{{\Omega^{{}}}}
\def\Omegag{{\Omega^{{}}}}
\def\Pif{{\Pi^{{}}}}
\def\Ff{{F^{{}}}}
\def\Fg{{F^{{}}}}
\def\taug#1{{{\tau}^{}_{#1}}}
\def\alphag#1{{\alpha^{{}}_{#1}}}
\def\betag#1{{\beta^{{}}_{#1}}}
\def\gammag#1{{\gamma^{{}}_{#1}}}
\def\hzeta{{\hat{\zeta}}}
\def\LA{\langle}
\def\RA{\rangle}
\def\fGm{{\mathbb{G}_m}}
\newtheorem{theorem}{Theorem}[section]
\newtheorem{definition}[theorem]{Definition}
\newtheorem{proposition}[theorem]{Proposition}
\newtheorem{corollary}[theorem]{Corollary}
\newtheorem{remark}[theorem]{Remark}
\newtheorem{lemma}[theorem]{Lemma}
\newtheorem{example}[theorem]{Example}
\def\book#1{\rm{#1}, }
\def\paper#1{\textit{#1}, }
\def\jour#1{\rm{#1}, }
\def\yr#1{({\rm{#1}) }}
\def\vol#1{\textbf{#1}}
\def\pages#1{\rm{#1}}
\def\by#1{{\rm{#1}, }}
\begin{document}

\title{}
\begin{center}{\Large{
Sigma functions for a space curve of type $(3,4,5)$}}
\end{center}

\author{Shigeki Matsutani and Jiryo Komeda}


\date{ }

\maketitle

\begin{abstract}
In this article, a generalized Kleinian sigma function
for an affine $(3,4,5)$ space curve of genus 2
was constructed as the simplest example of the sigma function for
an affine space curve,
and in terms of the sigma function,
the Jacobi inversion formulae for the curve
are obtained.
An interesting relation between a space curve with a 
semigroup generated by
$(6,13,14,15,16)$ and Norton number associated with
Monster group is also mentioned with an Appendix by Komeda.
\end{abstract}


\bigskip
sigma function, space curve, Jacobi inversion formula

\bigskip
{\bf{MSC 2010:
14H05, 
14H42, 
14H50, 
14H55, 
20M07 
}}
\bigskip

\section{Introduction}
\bigskip

Recently 
 the Kleinian sigma function 
for hyperelliptic curves,
a natural generalization of the 
Weierstrass sigma function, is re-evaluated because
in terms of the sigma functions,
it is more convenient 
to investigate the properties of the abelian functions 
and their interesting properties are revealed naturally
 \cite{BEL1,Ma01,EEMOP}.

Further in \cite{EEL}, Enolskii, Eilbeck, and Leykin discovered a construction
which generalizes the Kleinian sigma 
function associated with hyperelliptic curves to one for 
an affine $(r,s)$ plane curve, where
$r$ and $s$ $(r < s)$ are coprime positive integers 
$g=(r-1)(s-1)/2$.
In \cite{EEL},
they, firstly, constructed the fundamental differential of the second kind 
over an affine $(r,s)$ plane curve and 
using it, obtained the Legendre relation as the symplectic structure
over the curve.
Using the Legendre relation, they defined the generalized
Kleinian sigma function over  the image of the abelian map $\CC^g$.
They also found the natural Jacobi inversion formulae in terms of their
sigma function.
We call the construction {\it{EEL construction}} in this article.
Using the EEL construction,
we have some interesting results \cite{MP1,MP2}.

In this article, we consider 
a generalized Kleinian sigma function
for an affine $(3,4,5)$ space curve of genus 2, 
which is the simplest affine space curve.
Our purpose of this article is to show that the
sigma function is also defined 
for an affine space curve as we can do for plane curves.

Following the EEL-construction,
we define the fundamental differential of the second kind
over it and obtain the Legendre relation as the 
symplectic structure over it.
With the abelian map to $\CC^2$, we show that
the symplectic structure determines the sigma function.
Further using the sigma function,
we obtain the Jacobi inversion formulae for the curve and the
Jacobian following the previous work \cite{MP1,MP2}.

It means that the generalization of the sigma functions for 
the affine plane curves to ones for the space curves is basically
possible and is useful.
Recently, Korotkin with Shramchenko defined a sigma function
for a compact Riemann surface \cite{KS} but it is not directly associated with
an algebraic curve. Further Ayano introduced sigma functions 
for space curves of special class \cite{A}, which are called telescopic curves,
but the class does not include this (3,4,5) curve.

In Remark \ref{rmkF}, we also show a problem of a space curve 
associated with the semigroup generated by
$(6,13,14,15,16)$ with an Appendix by Komeda.
The semigroup might be related to Norton number associated with
the Monster group, the simple largest sporadic finite group \cite{MS}.

%
%

\section{Preliminary}

\subsection{Numerical semigroup}
Here we give a short overview of recent study of the
numerical semigroups as sub-semigroups of non-negative integers $\NN_0$
related to algebraic curves.
We call an additive semigroup in $\NN_0$
{\it{ numerical semigroup}}  if its complement in $\NN_0$ is finite. 
For a numerical semigroup $H = H(M)$ generated by $M$,
the number of elements of $L(H):=\NN_0\setminus H$ is called genus
and $L(H)$ is called gap sequence.
For example, we have semigroups $H_{2}$, $H_{4}$, $H_{12}$ generated by 
$ M_{2}:=\LA3, 4, 5\RA$,
$M_{4}:=\LA3, 7, 8\RA$,
$M_{12}:=\LA6, 13, 14, 15,  16\RA$
respectively whose genera are $g(H_{g})$ for $g = 2, 4, 12$ due to
$
L(H_{2})=\{1, 2\},$
$L(H_{4})=\{1, 2, 4, 5\},$
$
L(H_{12})=\{1, 2, 3, 4, 5, 7, 8, 9, 10, 11, 17, 23 \} .
$

For a complete non-singular irreducible curve $C$ of genus $g$
over an algebraically
closed field $k$ of characteristic 0, the field of its rational 
functions $k(C)$, and a point $P \in C$, we define 
\begin{equation}
 H(P):= \{n \in \NN_0\ |\ \mbox{there exists } f \in k(C)
                          \mbox{ such that } (f)_\infty = n P\ \}
\label{eq:H(P)}
\end{equation}
which is called the Weierstrass semigroup of the point $P$.
If $L(H(P)):=\NN_0\setminus H(P)$ differs from the set 
$\{1, 2, \cdots, g\}$, 
we call $P$ Weierstrass point of $C$.

A numerical semigroup $H$
is said to be Weierstrass 
if there exists a pointed algebraic curve $(C, P)$ such that 
$H=H(P)$.
Hurwitz considered whether every numerical semigroup H is
Weierstrass. This was a long-standing problem but
Buchweitz finally showed that every $H$ is not Weierstrass. 
His first counterexample is the semigroup $H_B$ generated by 
13, 14, 15, 16, 17, 18, 20, 22 and 23, whose genus is 16. 
Thus in general, it is not so trivial whether a given semigroup is
Weierstrass or not. Komeda has been investigated this problem
with Ohbuchi and Kim \cite{KK,K83,K99,KO04}.

\subsection{Commutative Algebra}\label{sec:2.2}
Here
we review a normal ring and normalization in commutative ring \cite{Mat}.
We assume that every ring is a commutative ring with unit.

$B$ is a ring and $A$ is a subring of $B$.
$B$ is said to be {\it{extension}} of $A$.
An element $b$ of $B$ is said to be {\it{integral}} over $A$ if
$b$ satisfies a monic polynomial over $A$, i.e.,
there exist $n$ and $\{a_i\}_{i=1,\dots,n}\in A$ such that
$
 b^n + a_1 b^{n-1} + \cdots a_n =0.
$

We say that $B$ is {\it{integral}} over $A$, or 
$B$ is an {\it{integral ring}} over $A$, or 
$B$ is an {\it{integral extension}} of $A$ 
if every element $b$ of $B$ is integral over $A$.

An {\it{integral closure}} in $B$ over $A$ is defined by
$
\tilde A:=\{b \in B\ |\ b \mbox{ is integral over } A\}.
$
If $A=\tilde A$, $A$ is {\it{integral closed}} in $B$.

\begin{definition}
$A$ is a ring and $Q(A)$ is a quotient ring of $A$.
We assume that $A$ is an integral domain.
$A$ is {\rm{normal}} if $A$ is integral closed in $Q(A)$, i.e.,
for
$
\tilde A:=\{q \in Q(A)\ | \ $ there exist $n$ and $a_i \in A$ such that 
$
 q^n + a_1 q^{n-1} + \cdots a_n =0
$ \}, 
$A=\tilde A$.

We define the minimum extension $\hat A$ of $A$ in $Q(A)$ so
that $\hat A$ is integral closed in $Q(A)$. 
We say that $\hat A$ is {\rm{normalization}} of $A$ or the
{\rm{normalized ring}} of $A$. 
\end{definition}

Through the correspondence between an algebraic variety and
a commutative ring,
we have the well-known normalization theorem \cite[p.5, p.68]{Gri}:
\begin{theorem}\label{thm:normal}
For any irreducible algebraic curve $X \subset P^2\CC$, there 
exists a compact Riemann surface $\tilde X$ and a holomorphic mapping
$s: \tilde X \to P^2\CC$ such that $s(\tilde X)=X$ and
$s$ is injective on the inverse image of the set of smooth
points of $X$.
Further the Riemann surface is unique up to its isomorphism;
if there are two Riemann surfaces $\tilde X$ and $\tilde X'$
given by normalizations of $X$, there is
a biholomorphic from $\tilde X$ to $\tilde X'$.
\end{theorem}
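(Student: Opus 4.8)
The statement is the classical existence and uniqueness of the normalization of an irreducible projective plane curve, so the plan is to treat the smooth locus directly as a Riemann surface, normalize analytically near the finitely many singular points, glue, and then use that a compact Riemann surface is essentially determined by its field of meromorphic functions to obtain uniqueness. First I would isolate the singular set $\Sigma$ of $X$: writing $F$ for a defining polynomial, $\Sigma$ is the common zero locus of $F$, $F_x$, $F_y$ on $X$, and since $X$ is irreducible $F$ shares no factor with a partial derivative, so by B\'ezout's theorem $\Sigma$ is finite. On $X \setminus \Sigma$ the implicit function theorem supplies a complex structure, so $X \setminus \Sigma$ is already an open Riemann surface and there $s$ is just the inclusion; all the genuine work is concentrated at the finitely many points of $\Sigma$.

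Next I would carry out the local normalization at each $p \in \Sigma$ by the Newton--Puiseux theorem: the germ of $X$ at $p$ splits into finitely many analytic branches, each the image of a \emph{primitive} parametrization $t \mapsto (x(t), y(t))$ defined on a small disk $\{|t| < \epsilon\}$ and injective for $t \neq 0$. In the language of \S\ref{sec:2.2}, this is exactly the statement that the integral closure of the local ring of $X$ at $p$ inside $Q(A)$ is a semilocal ring whose maximal ideals label the branches and whose localizations are discrete valuation rings, i.e.\ disks; finiteness of the number of branches is Noether's finiteness of the normalized ring $\hat A$ as an $A$-module for a finitely generated domain over $\CC$. I would then form $\tilde X$ by gluing $X \setminus \Sigma$ to one disk per branch over each $p$, using these parametrizations as holomorphic charts, and define $s$ to be the inclusion on the smooth part and the parametrization on each disk. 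Checking that $\tilde X$ is Hausdorff with biholomorphic transition maps makes it a Riemann surface, $s$ is holomorphic with $s(\tilde X) = X$, and $s$ is injective over smooth points because distinct branches map to distinct points and each branch is injective away from $p$. Since $X \subset P^2\CC$ is compact and $s$ is proper with finite fibres, $\tilde X$ is a compact Riemann surface.

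For uniqueness, given two normalizations $s : \tilde X \to X$ and $s' : \tilde X' \to X$, I would restrict over $X \setminus \Sigma$, where both are biholomorphisms, and set $\phi := (s')^{-1} \circ s$, a biholomorphism between the cofinite open subsets lying over the smooth locus. It then remains to extend $\phi$ across the finitely many points of $\tilde X$ over $\Sigma$: near such a point $\phi$ is a holomorphic map from a punctured disk into the compact surface $\tilde X'$, which extends holomorphically across the puncture by the removable-singularity theorem for maps into a compact Riemann surface. Applying the same argument to $\phi^{-1}$ shows the extension is biholomorphic and compatible with the projections to $X$, which is the asserted uniqueness up to isomorphism.

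The hard part will be the local analytic normalization at the singular points — establishing the Newton--Puiseux branch decomposition and verifying that the space obtained by gluing disks is genuinely Hausdorff (hence a bona fide compact Riemann surface rather than a non-separated object) — together with justifying the removable-singularity extension in the uniqueness half. Everything else is essentially formal, relying on the finiteness of $\hat A$ over $A$ recalled in \S\ref{sec:2.2}.
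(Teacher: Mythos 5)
Your outline is correct, but note that the paper itself offers no proof of this statement: it is quoted as ``the well-known normalization theorem'' with a citation to Griffiths, \emph{Introduction to Algebraic Curves}, so there is no internal argument to compare against. Your plan --- finiteness of the singular set via B\'ezout, Newton--Puiseux primitive parametrizations of the local branches, gluing one disk per branch onto the smooth locus, and uniqueness by extending $(s')^{-1}\circ s$ across the finitely many punctures --- is precisely the standard proof given in that cited reference, and the points you flag (Hausdorffness of the glued space, which follows from shrinking the disks so distinct branches meet only at the singular point, and the removable-singularity extension, which works because $s'$ has finite fibres over each singular point) are routine to fill in.
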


As examples of Theorem \ref{thm:normal}, we give three
examples. 

\begin{example}
{\bf{($x^3 - y^2$): }}
{\rm{
$R:=\CC[X,Y]/(X^3 - Y^2)$ is not normal because
$
  \frac{Y}{X} \in \tilde R \setminus R \subset Q(R)
$
due to $\left(\frac{Y}{X}\right)^2 - X = 0$.
Since
$R \approx \CC[t^2, t^3]$, the normalized ring is
$\hat R = \CC[t]$.
}}
\end{example}


\noindent
\begin{example}
{\bf{( $y^3 = x^5 -1$ and $w^3 = z - z^6$): }}
{\rm{
Following Theorem \ref{thm:normal}, we consider the
covering of a curve of $f(x,y) := y^3 - x^5 + 1$.
Let us consider a homogeneous polynomial
$F(X,Y,Z) := Y^3 Z^2 - X^5 + Z^5 \in \CC[X,Y,Z]$. 
Around $Z\neq 0$, we have
$
F(X,Y,Z) = Z^5\left( \frac{Y^3}{Z^3} - \frac{X^5}{Z^5} + 1 \right)
$
and thus by regarding that $x = X/Z$ and $y = Y/Z$, we have
$F(X,Y,Z) = Z^5 f(X/Z, Y/Z)$.
$R_0 := \CC[x, y]/(f(x,y))$ is a normal ring.
On the other hand,
around $Z = 0$ and $X \neq 0$, we have
$
F(X,Y,Z) = X^5\left( \frac{Y^3Z^2}{X^5} - 1 + \frac{Z^5}{X^5}\right),
$
and then we obtain a polynomial,
$
 g(w, z) = w^3 z^2 - 1 + z^5
$
by regarding $w = Y/X$ and $z = Z/X$. 
However $R_\infty := \CC[w, z]/(g(w,z))$ is not a normal ring.
As a vector space, $R_\infty$ is
$\CC 1 + \CC z + \CC z^2 +$ $  \cdots + \CC w + \CC w z + \CC w z^2 + \cdots$
$ + \CC w^2 + \CC w^2 z + \CC w^2 z^2 +  \cdots $ $ + \CC w^3 + \CC w^3 z$.
We  show that $q \in Q(R_\infty) \setminus R_\infty$ 
exists such that $q^n + a_1 q^{n-1} + \cdots a_n =0$ for
certain $a_i \in R_\infty$.
Noting
$\displaystyle{
 \frac{1}{1-z}g(w, z) = \frac{w^3 z^2}{1-z} + 1+z+z^2 + z^3 + z^4= 0 
\in Q(R_\infty),}
$
we consider
$
q := \frac{w^3}{1-z}+ \frac{1+z}{z^2} \in Q(R_\infty) \setminus R_\infty, 
$
which is integral over $R_\infty$.
By normalization, we define $\hat w := w z=y/x^2$.
$\hat R_\infty := \CC[\hat w, z]/(\hat g(w,z))$ is a normal ring,
where
$
 \hat g(\hat w, z) := \hat w^3 - z + z^6.
$
The minimal condition is obvious.
}}
\end{example}

\begin{example}\label{ex:3}
{\bf{ (a space curve; $y^3 = x^2(x^2-1)$
and $w^3 = x(x^2-1)^2$):}}
{\rm{
Let us consider a polynomial
$f(x, y) =y^3 - x^2(x^2-1)$ and show that
$R_0 := \CC[x, y]/(f(x,y))$ is not a normal ring.
As a vector space, $R_0$ is
$\CC 1 + \CC x + \CC x^2 +   \cdots $ $+ \CC y + \CC y x + \CC y x^2 + \cdots$
$ + \CC y^2 + \CC y^2 x + \CC y^2 x^2 + \cdots$.
We show that $w \in Q(R_0) \setminus R_0$ 
exists such that $w^n + a_1 w^{n-1} + \cdots a_n =0$
for certain $a_i$'s of $R_0$.
In other words, noting that $y \sim \sqrt[3]{x^2(x^2-1)}$
and $y^2 \sim x\sqrt[3]{x(x^2-1)}$,
one of $w$ is that  $ w:= \frac{y^2}{x}$
which is integral over $R_0$ because
$w^3 = \frac{y^6}{x^3} =  x(x^2-1)^2$ or $w^3 -  x(x^2-1)^2 = 0 \in R_0$.
Let $g(x,w) = x (x^2-1)^2$.
Noting the relations
that $w = \frac{y^2}{x}$, $y = \frac{w^2}{x^2-1}$, and $wy = (x^2-1)x$,
we have 
$
\hat R_0:=\CC[x,y,w]/(f_1(x,y,z), f_2(x,y,z)
 f_3(x,y,z)),
$
as the  normalized ring of $R_0$,
where
$f_1(x,y,w) = y^2 - w x$,
$f_2(x,y,w) = wy - (x^2-1)x$, 
and 
$f_3(x,y,w) = w^2 - y (x^2-1)$.
The minimal condition is also obvious.
This example corresponds to the special case of
the affine $(3,4,5)$ space curve in this article.
Due to Theorem \ref{thm:normal},
the corresponding Riemann surface uniquely exists up to an isomorphism.
}}
\end{example}

\section{A Curve (3,4,5)}

Since $H_2$ generated by $\LA3, 4, 5\RA$ is Weierstrass and 
is the simplest semigroup whose cardinality of the generators is
greater than 2, 
we consider a curve $C(H_2)$ explicitly in order to construct the
sigma functions for $C(H_2)$ following the
EEL construction.

Following Theorem \ref{thm:normal},
in order to construct a non-singular curve $X_2 = C(H_2)$,
we consider two singular curves $X_3$ and $X_4$ generated by
$\infty$ points and 
the zeroes of
$$
   f_{3,12}(x, y_4) := y_4^3 - k_4(x) ,
   \quad f_{4,15}(x, y_5) := y_5^3 - k_5(x) 
$$
where 
$k_4(x) := k_2(x) k_1(x)^2$, $k_5(x) := k_2(x)^2 k_1(x)$,
$k_2(x) := (x - \muc_1)(x - \muc_2)
= x^2 + \ltwo{1} x + \ltwo{2}$, and
$k_1(x) := (x - \muc_0) 
= x + \lone{1}$ 
for finite $\muc_a \in \CC$ $(a = 1, 2, 3)$
which is distinct from each other.
Let us consider commutative rings $R_3:=\CC[x, y_4]/(f_{3,12}(x,y_4))$ and
$R_4:=\CC[x, y_5]$ $/(f_{4,15}(x,y_5))$ related to $X_3$ and $X_4$ respectively.
These  genera of the semigroups associated with their Weierstrass non-gap
sequences at $\infty$-points are three and four respectively, though
the geometric genera are not.
Following the normalization in section 2, we normalize
$R_3$ and $R_4$.
Since in terms of the language of the commutative algebra \cite{Mat},
$\displaystyle{\frac{y_4^2}{(x-\muc_0)}}$
is integral over $R_3$ in $Q(R_3)$ and
$\displaystyle{\frac{y_5^2}{(x-\muc_0)(x-\muc_2)}}$
is integral over $R_4$ in $Q(R_4)$,
$R_3$ and $R_4$ are not normal rings.
Thus we will normalise them in $\CC[x, y_4, y_5]$ in the 
meaning of the commutative algebra \cite{Mat}
 (see Example \ref{ex:3} in \S \ref{sec:2.2}).

For the zeroes of $f_{3,12}(x,y_4)$ and
 $f_{3,15}(x,y_4)$,
we could have the relations,
\begin{equation}
	y_4 y_5 = k_2(x) k_1(x), \quad
           y_5 = \frac{y_4^2}{(x-\muc_0)}, \quad
           y_4 = \frac{y_5^2}{(x-\muc_1)(x-\muc_2)}\cdot \quad
\label{eq:rel345}
\end{equation}
Here for the primitive root
$\zeta_3$ $(\zeta_3^3 =1, \zeta_3 \neq 1)$, 
$\zeta_3$  acts on $X_3$ and $X_4$ respectively.
The first relation is chosen in the
possibilities $y_4 y_5 = \zeta_3^i k_2(x) k_1(x)$ $i=0,1,2$.

As a normalization of these singular curves,
we have the commutative ring,
$$
R_2 \equiv R:=\CC[x, y_4, y_5]/ (f_{8}, f_{9}, f_{10})
$$
and $X_2:=\Spec R$.
Here we define $f_{8}, f_{9}, f_{10} \in \CC[x, y_4, y_5]$ by
$$
f_{8} = y_4^2 - y_5 k_1(x), \quad
f_{9} = y_4 y_5 - k_2(x) k_1(x), \quad
f_{10} = y_5^2 - y_4 k_2(x)
$$
which are also regarded as the $2 \times 2$ minors of
$\displaystyle{
\left|\begin{matrix}
 k_2(x)  & y_{4} & y_{5} \\
 y_{4} & y_{5}  & k_3(x)\\
\end{matrix} \right|}$.
Here $\zeta_3$ acts on $X_2$ by 
$\hat \zeta_3 (x, y_4, y_5) = (x, \zeta_3 y_4, \zeta_3^2  y_5)$.

Let $X$ be the Riemann surface which is naturally obtained as
an extension of $X_2$ as mentioned in Theorem \ref{thm:normal}, i.e.,
$X = X_2 \cup \{\infty\}$ as a set.
It is noted that when $x$ diverges, $y_4$ and $y_5$ also diverge vise versa.
Thus the infinity point $\infty$ uniquely exists in $X$.
$\fGm$ acts on $R$ by setting $g_m^{-3} x$, $g_m^{-a} y_a$ for
$x$,  $y_a$,  $g_m \in \fGm$ and $a = 4, 5$.
By Nagata's Jacobi-method \cite{Mat}, it can be proved that
$X$ is non-singular.
 
Though they do not explicitly appear, we may also implicitly consider
parametrizations of $y_4$ and $y_5$ by
$
y_4 = w_2w_1^2,$ and  $y_5 = w_2^2 w_1,
$
where
$w_1^3 = k_1$ and $ w_2^3= k_2$.
When we consider $\tilde R:=\CC[x, w_1, w_2]/
(w_1^3 - k_1(x), w_2^3- k_2(x))$,
it is related to a natural covering of $X$.

\subsection{The Weierstrass gap and holomorphic one forms}
The Weierstrass gap sequences at $\infty$
are given in Table 1.
For the local parameter $t_\infty$ at $\infty$, we have
\begin{gather}
   x = \frac{1}{t_\infty^3}, \quad
   y_4 = \frac{1}{t_\infty^4}(1 + \dd_\ge(t_\infty)), \quad
   y_5 = \frac{1}{t_\infty^5}(1 + \dd_\ge(t_\infty)). \quad
\label{eq:x_t}
\end{gather}
Here for a given local parameter $t$ at some $P$ in $X$, 
 the series of $t$, whose orders of zero at $P$
are greater than $\ell$ or equal to $\ell$, is 
denoted by $\dd_\ge(t^\ell)$.
$H(\infty)$ in (\ref{eq:H(P)}) is $H(3,4,5)$
as Pinkham considered $(3,4,5)$ curve as the simplest example of the
numerical semigroup $H(3,4,5)$
\cite[Sec.14]{P}. Its monomial curve is defined by,
$
Z_4^2 = Z_3 Z_5,$ $ 
Z_4 Z_5 = Z_3^5,$ $
Z_5^2 = Z_3^3 Z_4,$
or the $2 \times 2$ minor of $
\left|\begin{matrix}
Z_3 & Z_4 & Z_5 \\ 
Z_4 & Z_5 & Z_3^2 \\ 
\end{matrix}\right|$.
$Z_3$, $Z_4$ and $Z_5$ correspond to 
$\displaystyle{\frac{1}{x}}$
$\displaystyle{\frac{1}{y_4}}$ and
$\displaystyle{\frac{1}{y_5}}$ respectively
and these relations correspond to (\ref{eq:rel345}).

\begin{gather*}
{\tiny{
\centerline{
\vbox{
	\baselineskip =10pt
	\tabskip = 1em
	\halign{&\hfil#\hfil \cr
        \multispan7 \hfil Table 1 \hfil \cr
	\noalign{\smallskip}
	\noalign{\hrule height0.8pt}
	\noalign{\smallskip}
$\ \ \ $ \strut\vrule & 
0 &1 & 2 & 3 & 4 & 5 & 6 & 7 & 8 & 9 & 10 & 11 & 12 & 13 & 8 & 9 & 10\cr
\noalign{\smallskip}
\noalign{\hrule height0.3pt}
\noalign{\smallskip}
$X_3$ \strut\vrule & 
 1& - & - & $x$ & $y_4$ & - & $x^2$& $x y_4$ & $y_4^2$ & $x^3$ 
& $x^2 y_4$ & $xy_4^2$& $x^4$ & $x^3y_4$ & $x^2y_4^2$ & $x^5$ & $x^4 y_4$\cr
$X_4$ \strut\vrule & 
 1& - & - & $x$ & - & $y_5$ & $x^2$& - & $xy_5$ & $x^3$ & $y_5^2$ & 
$x^2 y_5$ & $x^4$ & $x^2y_5$ & $x y_5^2$ & $x^5$ & $x^2y_5^2$ \cr
$X_2$ \strut\vrule & 
 1& - & - & $x$ & $y_4$ & $y_5$ & $x^2$& $x y_4$ & $x y_5$ & $y_4 y_5$ 
& $x^2 y_4$ 
& $x^2 y_5$ & $x^4$ & $x^3 y_4$ & $x^3y_5$ & $x^5$ & $x^4 y_4$ \cr
\noalign{\smallskip}
	\noalign{\hrule height0.8pt}
}
}
}
}}
\end{gather*}

There we define $\phi_{i}^{(g)}$ as a non-gap monomial 
in $R_g$ for $g= 2, 3, 4$ and 
{\it{e.g.}},
$\phi_{0}^{(2)} = 1$, 
$\phi_{1}^{(2)} = x$, 
$\phi_{2}^{(2)} = y_4$, 
$\phi_{3}^{(2)} = y_5$, 
$\phi_{4}^{(2)} = x^2$, 
$\cdots$ and
$\phi_{0}^{(3)} = 1$, 
$\phi_{1}^{(3)} = x$, 
$\phi_{2}^{(3)} = y_4$, 
$\phi_{3}^{(3)} = x^2$, 
$\phi_{4}^{(3)} = xy_4$, 
$\cdots$.
We introduce the weight $\Ng(n)$ by letting
$\Ng(n) := -\wt(\phi^{(g)}_{n})$,
where $\wt()$ is the degree of divisor at $\infty$ of each curve $X$'s.
It is noted that  $H_2$ is identical to $\{N^{(2)}(n)
 \ |\ n = 0, 1, 2, \ldots\}$.
For later convenience,
we also introduce $\phiHf{i} \in R$ $(i = 1, 2, 3, \cdots)$ by
$\phiHf{0} := y_4$, $\phiHf{1} := y_5$, 
$\phiHf{2} := x y_4$, $\phiHf{3} := x y_5$,
for $i > 3$,
$\displaystyle{
\phiHf{i} := \left\{ 
\begin{matrix}
x^{(i-4)/3} y_4 y_5 & i \equiv 1 \ \mathrm{mod} \ 3,\\
x^{(i+1)/3} y_4 & i \equiv 2 \ \mathrm{mod} \ 3,\\
x^{i/3} y_5 & i \equiv 0 \ \mathrm{mod} \ 3.\\
\end{matrix}
\right.
}$

We also define the weight $\NHf(n)$ by
$ \NHf(n) := -\wt(\phiHf{n})$;
$ \NHf(0) =4$,
$ \NHf(1) =5$,
and $ \NHf(n) =n + 5$ for $n\ge 2$.
By letting 
\begin{equation*}
\begin{split}
\Lambda_i^{(2)} &:= \NHf(2) - \NHf(i-1)  + i-3,\\
\Lambda_i^{(g)} &:= \Ng(g) - \Ng(i-1) -g + i -1, \quad (g=3,4)
\end{split}
\end{equation*}
the related Young diagrams, $\Lambda \equiv \Lambda^{(2)}
:=(\Lambda_1, \Lambda_2)=(1,1)
$,
$\Lambda^{(3)}
:=(\Lambda_1^{(3)}, \Lambda_2^{(3)},  \Lambda_3^{(3)})
=(3,1,1)$ 
and $\Lambda^{(4)}
:=(\Lambda_1^{(4)}, \Lambda_2^{(4)}, \Lambda_3^{(4)},
\Lambda_4^{(4)})
=(4,2,1,1)$  are given by respectively:
\begin{equation*}
\displaystyle{\yng(1,1)},\quad
\displaystyle{\yng(3,1,1)},\quad
\displaystyle{\yng(4,2,1,1)}.
\end{equation*}
The Young diagram $\Lambda$ is not symmetric, whereas
${}^t\Lambda^{(3)} = \Lambda^{(3)}$ and
${}^t\Lambda^{(4)} = \Lambda^{(4)}$.

Then the following propositions are obvious:

\begin{proposition}
Bases of the holomorphic one forms over $X$ are expressed by
$
\displaystyle{\nuIf{1} = \frac{ \dd x}{3y_5}}$ and $\displaystyle{\nuIf{2} = \frac{ \dd x}{3y_4}}$
or 
$\displaystyle{\nuIf{i} := \frac{\phiHf{i-1} \dd x}{3y_4 y_5}}$,
$(i=1,2)$. 
\end{proposition}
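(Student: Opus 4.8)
The plan is to verify two facts: (i) each $\nuIf{i}$ extends to a holomorphic one-form on the compact Riemann surface $X$, and (ii) $\nuIf{1}$ and $\nuIf{2}$ are linearly independent over $\CC$. Since $\dim_\CC H^0(X,\Omega^1)=g=2$, these together force $\{\nuIf{1},\nuIf{2}\}$ to be a basis. First I would observe that the two displayed descriptions agree, because $\phiHf{0}=y_4$ and $\phiHf{1}=y_5$ give $\phiHf{0}\,\dd x/(3y_4y_5)=\dd x/(3y_5)$ and $\phiHf{1}\,\dd x/(3y_4y_5)=\dd x/(3y_4)$.

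For holomorphicity the only candidate poles are at $\infty$, at the common zeros of $y_4$ and $y_5$, and along the ramification locus of the degree-three projection $x\colon X\to\PP^1$. Reading off the relations $f_8=f_9=f_{10}=0$, I would first show $y_4=0\iff y_5=0\iff x\in\{\muc_0,\muc_1,\muc_2\}$: indeed $f_{10}$ gives $y_4=0\Rightarrow y_5=0$, $f_8$ gives the converse, and then $f_9$ forces $k_2(x)k_1(x)=0$; conversely $k_1(x)=0$ makes $f_8$ read $y_4^2=0$ and $k_2(x)=0$ makes $f_{10}$ read $y_5^2=0$. These three points, together with $\infty$, are precisely the totally ramified points of the $x$-projection, which also reconfirms $g=2$ by Riemann--Hurwitz ($2g-2=3\cdot(-2)+4\cdot 2$).

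The local computations are then routine using the implicit parametrization $y_4=w_2w_1^2$, $y_5=w_2^2w_1$ with $w_1^3=k_1(x)$, $w_2^3=k_2(x)$. At $\infty$ I use $t_\infty$ from (\ref{eq:x_t}), where $\dd x$ contributes order $-4$ and $y_4y_5$ order $-9$, so $\nuIf{i}$ has order $5-\NHf(i-1)$, namely $1$ and $0$ for $i=1,2$ (matching the gap sequence $L(H_2)=\{1,2\}$ at $\infty$). At $x=\muc_0$ the uniformizer is $s=w_1$, with $x-\muc_0=s^3$, $y_5\sim s$, $y_4\sim s^2$ and $w_2$ a unit, giving orders $1$ and $0$; at $x=\muc_1,\muc_2$ the uniformizer is $s=w_2$, with $x-\muc_a=s^3\cdot(\mathrm{unit})$, $y_4\sim s$, $y_5\sim s^2$, giving orders $0$ and $1$. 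Away from these four points $x$ is a local coordinate and $y_4,y_5$ are units, so there is nothing to check. Hence both forms are holomorphic; $\nuIf{1}$ vanishes simply at $\infty$ and at the point over $x=\muc_0$, while $\nuIf{2}$ vanishes at the points over $x=\muc_1$ and $x=\muc_2$, each divisor having the expected canonical degree $2g-2=2$.

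Finally, linear independence is immediate from $\nuIf{1}/\nuIf{2}=y_4/y_5\notin\CC$ (equivalently, the two divisors of zeros are distinct), so $\{\nuIf{1},\nuIf{2}\}$ is a basis. I expect the only genuinely delicate step to be the local analysis at the finite branch points $\muc_0,\muc_1,\muc_2$: one must confirm that $w_1$ (resp.\ $w_2$) is genuinely a uniformizer there, i.e.\ that $X$ is smooth and totally ramified at these points, which is exactly the content of the Nagata Jacobian-criterion argument already invoked for $X$; and one must keep straight which of $y_4,y_5$ vanishes to order $1$ versus order $2$. Everything else is a direct substitution.
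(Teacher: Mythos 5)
Your proof is correct and is exactly the verification the paper leaves implicit (the paper merely declares the proposition ``obvious'' and records the divisors afterwards). Your local computations reproduce precisely the divisor data the paper states next, namely $(\nuIf{1})=\infty+B_0$ and $(\nuIf{2})=B_1+B_2$, each of canonical degree $2g-2=2$, so nothing further is needed.
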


We note their divisors and linear equivlanece; for $B_a:=(b_a,0,0)$ $(a=0,1,2)$, 
 $(\nuIf{1})=\infty+B_0$  $\sim(\dd x/y_5^2) =2(3\infty-B_1-B_2)$
and  $(\nuIf{1})\sim(\nuIf{2})=B_1+B_2$ $\sim(\dd x/y_4^2)=2(2\infty-B_0)$
$=2(\infty+(\infty-B_0))$.

\begin{proposition}
$ \sum_{i = 0}^n a_i\tilde \nu_i $
belongs to $H^1(X\setminus \infty, \cO_{X})$,
where $\tilde \nu_i:= \frac{\phiHf{i} \dd x}{3 y_4 y_5}
$ and the order of the singularity of $(\tilde \nu_i)$ at $\infty$
is given by $\NHf(n) - 5$.
\end{proposition}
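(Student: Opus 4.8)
The plan is to reduce every $\tilde\nu_i$ to a manifestly regular $1$-form on the affine curve $\Spec R$ by using the relations (\ref{eq:rel345}), and then to read off its behaviour at $\infty$ from the local expansions (\ref{eq:x_t}). First I would split into the three residue classes of $i$ modulo $3$ and rewrite $\tilde\nu_i$ in each case. Using $y_4 y_5 = k_2(x)k_1(x)$ together with $\nuIf{1} = \dd x/(3y_5)$ and $\nuIf{2}=\dd x/(3y_4)$ from the preceding proposition, one finds for $i\ge 2$
\begin{equation*}
\tilde\nu_i =
\begin{cases}
\dfrac{x^{(i-4)/3}}{3}\,\dd x, & i\equiv 1 \bmod 3,\\[1mm]
x^{(i+1)/3}\,\nuIf{1}, & i\equiv 2 \bmod 3,\\[1mm]
x^{i/3}\,\nuIf{2}, & i\equiv 0 \bmod 3,
\end{cases}
\end{equation*}
while $\tilde\nu_0=\nuIf{1}$ and $\tilde\nu_1=\nuIf{2}$. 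In every case $\tilde\nu_i$ is a polynomial in $x$ times one of $\dd x$, $\nuIf{1}$, $\nuIf{2}$. Since $x$ is a regular function on the smooth affine curve $\Spec R$, the form $\dd x$ has no poles there, and since $\nuIf{1},\nuIf{2}$ are holomorphic on all of $X$ by the preceding proposition, each $\tilde\nu_i$ is holomorphic on $X\setminus\infty$; that is, its only possible pole lies at $\infty$. Consequently any linear combination $\sum_{i=0}^n a_i\tilde\nu_i$ is a meromorphic $1$-form on $X$ regular away from $\infty$.

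Next I would check that these forms genuinely represent classes in $H^1(X\setminus\infty,\cO_X)$, i.e. that they are differentials of the second kind. Because each $\tilde\nu_i$ has its unique pole at $\infty$, the residue theorem on the compact Riemann surface $X$ forces $\res_\infty\tilde\nu_i = 0$; for $i\equiv 1 \bmod 3$ this is even more transparent, since $\tilde\nu_i=\tfrac13 x^{(i-4)/3}\,\dd x$ is exact on $\Spec R$. Hence each $\tilde\nu_i$, and therefore the whole span $\sum_{i=0}^n a_i\tilde\nu_i$, is a second-kind differential with poles only at $\infty$ and defines an element of $H^1(X\setminus\infty,\cO_X)$.

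Finally I would compute the order of the pole at $\infty$ from (\ref{eq:x_t}). As $\phiHf{i}$ has a pole of order $\NHf(i)$ at $\infty$, i.e. $\phiHf{i}\sim t_\infty^{-\NHf(i)}$, while $\dd x = -3\,t_\infty^{-4}\,\dd t_\infty$ and $y_4 y_5\sim t_\infty^{-9}$, substitution gives
\begin{equation*}
\tilde\nu_i = \frac{\phiHf{i}\,\dd x}{3y_4 y_5} \sim -\,t_\infty^{\,5-\NHf(i)}\bigl(1+\dd_\ge(t_\infty)\bigr)\,\dd t_\infty ,
\end{equation*}
so the order of the singularity of $\tilde\nu_i$ at $\infty$ is $\NHf(i)-5$, as claimed. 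The step I expect to demand the most care is the regularity at the finite branch points $B_a=(b_a,0,0)$, where $y_4=y_5=0$ and the denominator $3y_4y_5$ vanishes; this is precisely what the case-by-case rewriting above disposes of, because the factor of $y_4$ or $y_5$ in $\phiHf{i}$, or the identity $y_4y_5=k_2(x)k_1(x)$, cancels the vanishing denominator and reduces $\tilde\nu_i$ to one of the forms $\dd x$, $\nuIf{1}$, $\nuIf{2}$, whose divisors $(\nuIf{1})=\infty+B_0$ and $(\nuIf{2})=B_1+B_2$ are already recorded and confirm holomorphy off $\infty$.
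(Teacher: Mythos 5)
Your proof is correct. The paper offers no argument here (it declares this proposition, together with the basis of holomorphic one-forms, ``obvious''), and your case-by-case reduction of $\tilde\nu_i$ to $x^k\,\dd x$, $x^k\nuIf{1}$, $x^k\nuIf{2}$, combined with the residue-theorem remark and the local expansion $\tilde\nu_i\sim t_\infty^{\,5-\NHf(i)}\,\dd t_\infty$ at $\infty$, is precisely the direct verification the paper leaves implicit, including the needed check of regularity at the branch points $B_a$ via the recorded divisors of $\nuIf{1}$ and $\nuIf{2}$.
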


\begin{lemma}
$\displaystyle{
 a_0 \frac{\dd x}{ y_4 y_5}
+ a_1 \frac{x \dd x}{ y_4 y_5}
+ a_2 \frac{x^2 \dd x}{ y_4 y_5}}
$ is not holomorphic one form over $X$
if $a_i$ does not vanish.
\end{lemma}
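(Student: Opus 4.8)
The plan is to write the combination as a single meromorphic differential $\eta := \dfrac{P(x)\,\dd x}{y_4 y_5}$ with $P(x) := a_0 + a_1 x + a_2 x^2$, and to show that $\eta$ acquires a pole at one of the branch points $B_a=(b_a,0,0)$ unless $P\equiv 0$. The only points of $X$ at which $\eta$ can be singular are the zeros of $y_4 y_5$ and the point $\infty$. Since $y_4 y_5 = k_1(x)k_2(x)=(x-\muc_0)(x-\muc_1)(x-\muc_2)$, its zeros on $X$ are exactly the three points $B_0,B_1,B_2$ (with $b_0=\muc_0$, $b_1=\muc_1$, $b_2=\muc_2$), at which $y_4$ and $y_5$ vanish simultaneously. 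Away from these points and $\infty$, the coordinate $x$ is a local parameter and $y_4 y_5\neq 0$, so $\eta$ is manifestly regular; the whole question therefore reduces to the local behaviour at the three $B_a$.

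First I would pin down the local structure of $X$ at each $B_a$. These are precisely the fixed points of the $\ZZ/3$-action $\hat\zeta_3$ other than $\infty$, so the degree-$3$ map $x\colon X\to\PP^1$ is totally ramified there. Concretely, from the minor relations $f_8=f_9=f_{10}=0$ one reads off, in a local parameter $s$ at $B_a$, that $x-b_a$ has a zero of order $3$: for instance at $B_0$ (a root of $k_1$) one has $y_5\sim s$, $y_4=y_5^2/k_2(x)\sim s^2$, and $k_1(x)=y_4^2/y_5\sim s^3$, whence $x-b_0\sim s^3$; the situation at $B_1,B_2$ is identical with the roles of $y_4$ and $y_5$ exchanged. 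Consequently $\mathrm{ord}_{B_a}(\dd x)=2$, while $y_4 y_5$ has a simple zero in $x$ and hence $\mathrm{ord}_{B_a}(y_4 y_5)=3$.

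The order count is then immediate:
$$\mathrm{ord}_{B_a}(\eta)=\mathrm{ord}_{B_a}(P)+\mathrm{ord}_{B_a}(\dd x)-\mathrm{ord}_{B_a}(y_4 y_5)=\mathrm{ord}_{B_a}(P)-1,$$
so $\eta$ is regular at $B_a$ if and only if $P(b_a)=0$. Because the poles at the distinct points $B_0,B_1,B_2$ cannot cancel one another, $\eta$ is a holomorphic one-form on $X$ only if $P(b_0)=P(b_1)=P(b_2)=0$. As $b_0,b_1,b_2$ are three distinct complex numbers and $\deg P\le 2$, this forces $P\equiv 0$, i.e. $a_0=a_1=a_2=0$. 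Contrapositively, if some $a_i\neq 0$ then $\eta$ has a simple pole at some $B_a$ and is not holomorphic. As an independent check at infinity, the expansions $x=t_\infty^{-3}$ and $y_4 y_5=t_\infty^{-9}(1+\dd_\ge(t_\infty))$ give $\mathrm{ord}_\infty(\eta)=5-3\deg P$, so when $a_2\neq 0$ the failure of holomorphicity is already visible as a pole at $\infty$.

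The main obstacle is the first step: justifying the ramification orders at the $B_a$ directly from the defining relations, i.e. that $x-b_a$ vanishes to order exactly $3$ there. Once this local analysis is secured, the remaining order-counting and the degree argument for $P$ are routine.
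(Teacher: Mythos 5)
Your proof is correct and follows the same route as the paper's: the paper's one-line proof merely asserts that every such nonzero combination has a singularity at a finite point of $X$, and those finite points are exactly the totally ramified points $B_0,B_1,B_2$ that you analyze. Your local order count $\mathrm{ord}_{B_a}(\eta)=\mathrm{ord}_{B_a}(P)-1$, combined with the observation that a nonzero polynomial of degree at most $2$ cannot vanish at three distinct values $b_0,b_1,b_2$, supplies precisely the justification the paper leaves implicit.
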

\begin{proof}
For $n<3$, every 
$\sum_{i = 0}^n a_i \frac{x^i \dd x}{ y_4 y_5}$ has singularities
at points in $X \setminus \infty$. 
\end{proof}

We choose  the bases
$\alpha_{i}, \beta_{j}$  $ (1\leqq i, j\leqq 2)$
of $H_1(X,\ZZ)$ such that their intersection numbers are
$\alpha_{i}\cdot\alpha_{j}=\beta_{i}\cdot\beta_{j}= 0$ and 
$\alpha_{i}\cdot\beta_{j}=\delta_{ij}$, 
and we denote the period matrices by
$\displaystyle{ \left[\,\omegafp{}  \ \omegafpp{}  \right]= 
\frac{1}{2}\left[\int_{\alpha_{i}}\nuIf{j} \ \ \int_{\beta_{i}}\nuIf{j}
\right]_{i,j=1,2}}$.
Let $\Pi_2$ be a lattice 
generated by $\omegafp{}$ and $\ \omegafpp{}$.
For a point $P \in X$, the abelian map $\hu_o : X \to \CC^2$
is defined by
$$
\displaystyle{\hu_o(P) = \int^P_{\infty} \nuIf{} \in \CC^2}
$$
and for a point $(P_1,\cdots, P_k) \in S^k X$, i.e.,
the $k$-th symmetric product of $X$, the shifted abelian map
$\hu : S^k X \to \CC^2$ by
$$
\hu(P_1, \cdots, P_k) :=\hu_o(P_1, \cdots, P_k)+
\hu_o(B_0)
$$
where $\hu_o(P_1, \cdots, P_k) :=\sum_{i=1}^k \hu_o(P_i)$.
Then we define the Jacobian $\cJ_2$ and its subvariety $\WWf^k$ 
$(k = 0, 1, 2)$ by
$$
\kappa : \CC^2 \to \cJ_2 =\CC^2/\Pi_2 = \WWf^2,
   \quad \WWf^k := \kappa \hu(S^k X)
$$
respectively. Further the singular locus of 
$S^2 X$ is denoted by $S^2_1 X$ as in \cite{MP1}.

For a point $(P_1, P_2) \in S^2 X$ around the infinity point,
by letting their local parameters $t_{\infty,1}$ and $t_{\infty,2}$,
$u\equiv {}^t(u_1, u_2):=\hu_o(P_1, P_2)$ is given by
$u_1 = $ $\frac{1}{2}( t_{\infty,1}^2 +t_{\infty,2}^2)
(1+\dd_{>0}(t_{\infty,1},t_{\infty,2})),$
$ u_2 =$ $ ( t_{\infty,1} +t_{\infty,2}) 
(1+\dd_{>0}(t_{\infty,1},t_{\infty,2})),
$
where 
$\dd_{\ge}(t_{1},t_{2})$ is a natural extension of
$\dd_{\ge}(t)$.

\subsection{Differentials of the second and the third kinds}
\label{differential forms 2}
Following the EEL-construction \cite{EEL} for a $(n,s)$ curve,
we give an algebraic representation of
 a differential form which 
is equal to the fundamental normalized differential of the 
second kind in \cite[Corollary 2.6]{F},
up to a tensor of holomorphic one forms:
\begin{definition}
A two-form $\Omegaf(P_1, P_2)$ on $X\times X$ is called 
a {\it fundamental differential of the second kind} if it is symmetric, 
$\Omegaf(P_1, P_2)=\Omegaf(P_2, P_1)$, 
it has its only pole (of second order) along the diagonal of  $X\times X$, 
and in the vicinity of each point $(P_1,P_2)$ 
 is expanded in power series as 
\begin{equation}
\Omegaf(P_1, P_2)=\Big(\frac{1}{(t_{P_1} -t_{P_2} ')^2  } +\dd_\ge(1)\Big)
   \dd t_{P_1} \otimes \dd t_{P_2}
\ \ (\hbox{\rm as}\  P_1\rightarrow P_2)
\label{expansion}
\end{equation}
where $t_P$ is a local coordinate at a point $P \in X$.
\end{definition}

Here we use the convention that for $P_a \in X$,
$P_a$ is represented by $(x_a, y_{4,a}, y_{5,a})$
or $(x_{P_a}, y_{4,{P_a}}, y_{5,{P_a}})$
and for $P \in X$, $P$ is expressed by $(x, y_{4}, y_{5})$.
Then the following propositions holds.

\begin{proposition}
By letting 
$$\displaystyle{
   \Sigmaf\big(P, Q\big)
   :=\frac{ y_{4,P} y_{5,P} +y_{4,P} y_{5,Q} +y_{4,Q} y_{5,P}}
{(x_P - x_Q) 3 y_{4,P} y_{5,P} } \dd x_P
}$$
$\Sigmaf(P, Q)$ has the properties: 

\noindent
1) $\Sigmaf(P, Q)$ as a function of $P$ is singular at 
$Q=(x_Q, y_{4,Q}, y_{5,Q})$ and  $\infty$, and 
vanishes at $\hzeta_3^\ell(Q)= (x_Q, \zeta_3^\ell y_{4,Q}, 
\zeta_3^{2\ell} y_{5,Q})$, $(\ell = 1, 2)$, and

\noindent
 2)
$\Sigmaf(P, Q)$ as a function of $Q$ is singular at $P$ and
at $\infty$.
\end{proposition}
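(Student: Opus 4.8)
The plan is to read all zeros and poles of $\Sigmaf(P,Q)$ straight off its rational expression, using the one algebraic input that on $X$ the product $y_4y_5$ equals $k_2(x)k_1(x)$ by (\ref{eq:rel345}) and so depends only on $x$. Abbreviating the numerator by $N(P,Q):=y_{4,P}y_{5,P}+y_{4,P}y_{5,Q}+y_{4,Q}y_{5,P}$ and using $y_{4,P}y_{5,P}=k_2(x_P)k_1(x_P)$, I would rewrite
\begin{equation*}
\Sigmaf(P,Q)=\frac{N(P,Q)}{x_P-x_Q}\cdot\frac{\dd x_P}{3\,k_2(x_P)k_1(x_P)} .
\end{equation*}
The candidate singular points of $\Sigmaf(\cdot,Q)$ are then exactly the zeros of $x_P-x_Q$ (the three preimages of $x_Q$), the zeros of $k_2(x_P)k_1(x_P)$ (the branch points $B_a$), and $\infty$; at each I compare the order of $N$ in a local parameter against those of $x_P-x_Q$ and of $k_2(x_P)k_1(x_P)$.

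For part 1 I would dispose of the four cases in turn. At $P=Q$ (generic, so $x_P$ is a local parameter) the factor $x_P-x_Q$ has a simple zero and $N(Q,Q)=3y_{4,Q}y_{5,Q}=3k_2(x_Q)k_1(x_Q)\neq0$, so $\Sigmaf$ has a simple pole with residue $N(Q,Q)/(3k_2(x_Q)k_1(x_Q))=1$. At $P=\hat\zeta_3^{\ell}(Q)$ with $\ell=1,2$ one again has $x_P\to x_Q$, but substituting $y_{4,P}=\zeta_3^{\ell}y_{4,Q}$, $y_{5,P}=\zeta_3^{2\ell}y_{5,Q}$ and $y_{4,P}y_{5,P}=y_{4,Q}y_{5,Q}$ gives $N=y_{4,Q}y_{5,Q}\,(1+\zeta_3^{\ell}+\zeta_3^{2\ell})=0$; the numerator thus cancels the simple zero of $x_P-x_Q$, so the prospective pole is absent, and this is the vanishing asserted in the proposition. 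At $\infty$ I would insert (\ref{eq:x_t}): there $N\sim k_2k_1\sim t_\infty^{-9}$, $x_P-x_Q\sim t_\infty^{-3}$, and $\dd x_P/(3k_2k_1)\sim -t_\infty^{5}\dd t_\infty$, whence $\Sigmaf\sim -t_\infty^{-1}\dd t_\infty$, a simple pole of residue $-1$. Finally at each $B_a$ I would use the ramification data $x-\muc_a\sim t^{3}$ with $(y_4,y_5)\sim(t^{2},t)$ at $B_0$ and $(y_4,y_5)\sim(t,t^{2})$ at $B_1,B_2$: then $\dd x_P/(3k_2k_1)$ has a simple pole in $t$, while $N$ has a simple zero (its leading term is the surviving $y_{4,Q}y_{5,P}$ at $B_0$, or $y_{4,P}y_{5,Q}$ at $B_1,B_2$), so the two cancel and $\Sigmaf$ is regular. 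This exhausts the candidates and shows $\Sigmaf(\cdot,Q)$ is singular precisely at $Q$ and $\infty$.

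For part 2 the bookkeeping is lighter, because $\dd x_P/(3k_2(x_P)k_1(x_P))$ carries no $Q$-dependence and only the scalar $N(P,Q)/(x_P-x_Q)$ varies with $Q$; in particular no branch point of $Q$ can produce a pole. At $Q=P$ the factor $x_P-x_Q$ has a simple zero and $N\to 3y_{4,P}y_{5,P}\neq0$, giving a pole; at $Q=\hat\zeta_3^{\ell}(P)$ the same identity $1+\zeta_3^{\ell}+\zeta_3^{2\ell}=0$ forces $N=0$ and the apparent pole cancels; and as $Q\to\infty$, with $s$ a local parameter at $\infty$ one has $N\sim y_{4,P}y_{5,Q}\sim y_{4,P}s^{-5}$ against $x_P-x_Q\sim -s^{-3}$, so $N/(x_P-x_Q)\sim -y_{4,P}s^{-2}$ is singular. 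Hence $\Sigmaf(P,\cdot)$ is singular precisely at $P$ and $\infty$.

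The one essential ingredient is the cancellation $1+\zeta_3^{\ell}+\zeta_3^{2\ell}=0$ for $\ell=1,2$, which, together with $y_4y_5$ being a function of $x$ alone, is exactly what strips the crude Cauchy kernel $1/(x_P-x_Q)$ of its spurious poles on the two wrong $\zeta_3$-sheets while leaving genuine poles only on the diagonal and at $\infty$. I expect the only delicate step to be the branch-point check in part 1, where $x$ is not a local parameter and one must track the fractional orders coming from the ramification; every other case is a one-line substitution. As a global check I would observe that the residues $+1$ at $Q$ and $-1$ at $\infty$ sum to zero on the compact $X$, confirming that $\Sigmaf(\cdot,Q)$ is a normalized differential of the third kind.
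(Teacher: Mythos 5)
Your verification is correct and is precisely the ``direct computation'' that the paper's one-line proof leaves to the reader: you use $y_{4,P}y_{5,P}=k_1(x_P)k_2(x_P)$ to list every candidate singularity of $\Sigmaf(\cdot,Q)$ --- the three preimages of $x_Q$, the branch points $B_a$, and $\infty$ --- and the order counts you give at each (including the ramification orders $(y_4,y_5)\sim(t^2,t)$ at $B_0$ and $(t,t^2)$ at $B_1,B_2$, and the residues $+1$ at $Q$ and $-1$ at $\infty$) are all right, as is the lighter check in the variable $Q$. The only caveat concerns wording: what you actually prove at $\hzeta_3^{\ell}(Q)$ is that the numerator acquires a simple zero from $1+\zeta_3^{\ell}+\zeta_3^{2\ell}=0$ which cancels the simple zero of $x_P-x_Q$, so that $\Sigmaf(\cdot,Q)$ is \emph{regular} and generically nonzero there; since the divisor of $\Sigmaf(\cdot,Q)$ has the form $D-Q-\infty$ with $D$ effective of degree $4$ supported at the four remaining zeros of the numerator, the word ``vanishes'' in the proposition must be read in exactly the sense you give it (the numerator vanishes, killing the apparent pole), and it is worth saying so explicitly rather than asserting that this ``is the vanishing asserted.''
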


\begin{proof}
Direct computations lead the results.
\end{proof}

\begin{proposition} \label{prop:dSigma}
There exist  differentials $\nuIIf{j}=\nuIIf{j}(x,y_4,y_5)$ 
$(j=1, 2)$ 
of the second kind such that
they have
 their only pole at $\infty$ and satisfy the relation,
\begin{equation}
  \dd_{Q} \Sigmaf\big(P, Q\big) - 
  \dd_{P} \Sigmaf\big(Q, P\big)
  = \sum_{i = 1}^2 \Bigr(
         \nuIf{i}(Q)\otimes \nuIIf{i}(P)
        - \nuIf{i}(P)\otimes \nuIIf{i}(Q)
     \Bigr)
   \label{eq3.2}
\end{equation} 
where
$\displaystyle{
 \dd_{Q} \Sigmaf\big(P, Q\big)
   :=\dd x_P \otimes \dd x_Q\frac{\partial }{ \partial x_Q}
   \frac{
y_{4,P} y_{5,P}
+y_{4,P} y_{5,Q}
+y_{4,Q} y_{5,P}}
{(x_P - x_Q) 3 y_{4,P}  y_{5,P} } .
}$

The differentials $\{\nuIIf{1}, \nuIIf{2}\}$
are determined modulo the $\mathbb{C}$-linear space spanned by
$\langle\nuIf{j}\rangle_{j=1, 2}$;
we fix 
$$
\displaystyle{
\left\{\nuIIf{1}, \nuIIf{2}\right\}
}
\displaystyle{
 = 
\left\{ \frac{ -\left(2 x +\ltwo{1}\right) \dd x }{3 y_{4} },\ 
\frac{ -x \dd x }{3 y_{5} } \right\}
}
$$
as their representative.
\end{proposition}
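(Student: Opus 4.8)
The plan is to read the asserted identity as the symmetry condition for the fundamental differential of the second kind. One sets $\Omegaf(P,Q):=\dd_{Q}\Sigmaf(P,Q)+\sum_{i=1}^{2}\nuIf{i}(P)\otimes\nuIIf{i}(Q)$; then demanding $\Omegaf(P,Q)=\Omegaf(Q,P)$ is exactly (\ref{eq3.2}). So it suffices to produce forms $\nuIIf{i}$, with their only poles at $\infty$, for which the bidifferential $D(P,Q):=\dd_{Q}\Sigmaf(P,Q)-\dd_{P}\Sigmaf(Q,P)$ has the bilinear shape on the right of (\ref{eq3.2}).

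First I would analyse $D$ near the diagonal. As $Q\to P$ the numerator of $\Sigmaf(P,Q)$ tends to $3y_{4,P}y_{5,P}$, so $\Sigmaf(P,Q)=\dfrac{\dd x_{P}}{x_{P}-x_{Q}}+(\text{regular})$ there, and $\dd_{Q}\Sigmaf(P,Q)$ carries the leading double pole $\dfrac{\dd x_{P}\otimes\dd x_{Q}}{(x_{P}-x_{Q})^{2}}$, which is symmetric under $P\leftrightarrow Q$ and has no accompanying simple pole. Hence in $D$ the symmetric diagonal singularities cancel. By the preceding proposition $\Sigmaf$ has in each variable no singularity off the diagonal except at $\infty$ (its zeros at $\hzeta_{3}^{\ell}(Q)$ produce no poles), so $D$ is an antisymmetric bidifferential that is holomorphic on $(X\setminus\infty)\times(X\setminus\infty)$, with poles confined to $\{\infty\}\times X$ and $X\times\{\infty\}$.

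Next I would force the bilinear structure using the $\fGm$-grading, for which $\wt(x)=-3$, $\wt(y_{4})=-4$, $\wt(y_{5})=-5$ and $\wt(t_{\infty})=1$; one checks that $\Sigmaf$, and so $D$, is homogeneous of weight $0$. Expanding $P\mapsto D(P,Q)$ at its only pole $\infty$, a term $t_{\infty}^{-k}\dd t_{\infty}$ has weight $1-k$, so its coefficient, a one-form in $Q$ holomorphic off $\infty$, must have weight $k-1$. A one-form with a pole at $\infty$ has weight $\le 0$, while the holomorphic one-forms are exactly $\nuIf{1},\nuIf{2}$ of weights $2,1$; thus only $k=2,3$ survive, the coefficients are holomorphic and lie in $\langle\nuIf{1},\nuIf{2}\rangle$, and $D(P,Q)=\sum_{i}\nuIf{i}(Q)\otimes\nuIIf{i}(P)+(\text{holomorphic in }P)$. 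Antisymmetry fixes the remainder as $-\sum_{i}\nuIf{i}(P)\otimes\nuIIf{i}(Q)$, which is (\ref{eq3.2}); each $\nuIIf{i}$ has a single pole at $\infty$, hence zero residue there and is of the second kind, and the balance $\wt(\nuIf{i})+\wt(\nuIIf{i})=0$ pairs $\nuIf{1},\nuIf{2}$ with forms $\nuIIf{1},\nuIIf{2}$ of weights $-2,-1$. The $\nuIIf{i}$ are determined only modulo $\langle\nuIf{1},\nuIf{2}\rangle$.

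Finally, to exhibit the stated representatives I would expand $D$ at $P=\infty$ with $x=t_{\infty}^{-3}$, $y_{4}=t_{\infty}^{-4}(1+\dd_{\ge}(t_{\infty}))$, $y_{5}=t_{\infty}^{-5}(1+\dd_{\ge}(t_{\infty}))$ and match principal parts against $\nuIf{1}(Q),\nuIf{2}(Q)$; the candidates $\nuIIf{1}\sim t_{\infty}^{-3}\dd t_{\infty}$ and $\nuIIf{2}\sim t_{\infty}^{-2}\dd t_{\infty}$ have the right weights and pair with $\nuIf{1}\sim t_{\infty}\dd t_{\infty}$, $\nuIf{2}\sim\dd t_{\infty}$. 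I expect the main obstacle to be the explicit rational computation of $D$: one must perform $\partial/\partial x_{Q}$ with $y_{4,Q},y_{5,Q}$ constrained by $f_{8},f_{9},f_{10}$, that is $y_{4}^{2}=y_{5}k_{1}$, $y_{4}y_{5}=k_{2}k_{1}$, $y_{5}^{2}=y_{4}k_{2}$, and then check that once these relations are imposed every subleading diagonal term cancels and the remainder collapses exactly to the right-hand side of (\ref{eq3.2}) with $\nuIIf{1}=\dfrac{-(2x+\ltwo{1})\dd x}{3y_{4}}$ and $\nuIIf{2}=\dfrac{-x\dd x}{3y_{5}}$. This is bookkeeping rather than a conceptual difficulty, but it is where the real work lies.
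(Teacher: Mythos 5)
Your proposal is correct in substance but takes a genuinely different route from the paper. The paper's proof is a single direct computation: it differentiates the kernel of $\Sigmaf$ explicitly, antisymmetrizes, and observes that everything collapses to $\bigl(B_2(P,Q)-B_2(Q,P)\bigr)/\bigl(9(x_P-x_Q)y_{4,P}y_{5,P}y_{4,Q}y_{5,Q}\bigr)$ with $B_2(P,Q)=y_{4,P}y_{5,Q}(2x_Q+\ltwo{1}-x_P)$, which is then read off as the required bilinear combination, yielding the stated representatives in one stroke. You instead argue structurally: the symmetric double pole cancels in the antisymmetrization, the preceding proposition confines the remaining singularities of $D(P,Q)$ to $\{\infty\}\times X\cup X\times\{\infty\}$, and the $\fGm$-weight grading (with weights of the $\lone{}$, $\ltwo{}$ all $\le -3$) forces the principal part of $D(\cdot,Q)$ at $P=\infty$ to have order at most $3$ with coefficients proportional to $\nuIf{2}(Q)$ and $\nuIf{1}(Q)$; antisymmetry then produces \eqref{eq3.2} and the ambiguity modulo $\langle\nuIf{1},\nuIf{2}\rangle$. (Two small points you elide but which are easily repaired: the coefficient of $t_\infty^{-1}\dd t_\infty$ vanishes by the residue theorem rather than by your weight dichotomy alone, since a weight-$0$ coefficient is not excluded by ``weight $\le 0$''; and absorbing the antisymmetric holomorphic remainder into the $\nuIIf{i}$ deserves a line.) What each approach buys: yours explains \emph{why} the answer must have this shape and pins the representatives down to finitely many undetermined constants, and it would generalize to other Weierstrass semigroups; but the explicit coefficients $-2$, $-\ltwo{1}$, $-1$ in the stated representatives still require the rational computation you defer to as ``bookkeeping'' --- and that computation is, in fact, the entirety of the paper's proof. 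So your argument is a valid and more conceptual substitute for the existence and uniqueness-mod-holomorphic statements, but as written it does not yet verify the displayed formulas for $\nuIIf{1}$ and $\nuIIf{2}$.
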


\begin{proof}
$\displaystyle{\frac{\partial }{ \partial x_Q}
   \frac{ y_{4,P} y_{5,P} +y_{4,P} y_{5,Q} +y_{4,Q} y_{5,P}}
{(x_P - x_Q) 3 y_{4,P}  y_{5,P} } \dd x_P \\
}$ is equal to
{\small{
\begin{equation*}
\begin{split}
&\frac{1}{(x_P - x_Q) 
9y_{4,P} y_{5,P} y_{4,Q} y_{5,Q}} 
\Bigr[
\frac{
3(y_{4,P} y_{5,P} +y_{4,P} y_{5,Q} +y_{4,Q} y_{5,P})
y_{4,Q} y_{5,Q}}{(x_P - x_Q) }\\
& 
+ 
\Bigr(y_{4,P} 
\frac{y_{4,Q}}{y_{5,Q}}(
2k_{2,Q} k_{2,Q}' k_{1,Q} +k_{2, Q}^2  k_{1,Q}')
+ y_{5,P} 
\frac{y_{5,Q}}{y_{4,Q}}(
2k_{2,Q} k_{1,Q}k_{1,Q}'+ k_{2, Q}' k_{1,Q}^2))
\Bigr) \Bigr].\\
\end{split}
\end{equation*}
}}
Here $k_{a,P} = k_a(x_P)$ and $k_{a,P}' = \dd k_a(x_P)/\dd x_P$.
We have
$$
   \frac{\partial }{ \partial x_Q}
   \frac{ y_{4,P} y_{5,P} +y_{4,P} y_{5,Q} +y_{4,Q} y_{5,P}}
{(x_P - x_Q) 3 y_{4,P}  y_{5,P} }  
 -
   \frac{\partial }{ \partial x_P}
   \frac{ y_{4,Q} y_{5,Q} +y_{4,Q} y_{5,P} +y_{4,P} y_{5,Q}}
{(x_Q - x_P) 3 y_{4,Q}  y_{5,Q} }  
$$
\begin{equation*}
\begin{split}
&=\frac{1}{(x_P - x_Q) 
9y_{4,P} y_{5,P} y_{4,Q} y_{5,Q}} 
\left(B_2(P, Q) - B_2(Q, P)\right)
\end{split}
\end{equation*}
where
$B_2(P, Q) = 
y_{4,P} y_{5,Q}\Bigr(2x_Q +\ltwo{1}-x_P\Bigr)$.
Then we obtain the statements. 
\end{proof}


\begin{corollary}
\label{cor:Sigmaf}
1) The one form,
$
\Pif_{P_1}^{P_2}(P):= \Sigmaf(P, P_1) -  \Sigmaf(P, P_2),
$
is a differential of the third kind,  whose only 
(first-order) poles are
$P=P_1$ and $P=P_2$, and residues $+1$ and $-1$ 
respectively.  

2) $\displaystyle{
\Omegaf(P_1, P_2) 
}$ is defined by
$
\displaystyle{
 \dd_{P_2} \Sigmaf(P_1, P_2) 
     +\sum_{i = 1}^2 \nuIf{i}(P_1)\otimes \nuIIf{i}(P_2)
}$
$$
\Omegaf(P_1, P_2)   =\frac{\Ff(P_1, P_2)\dd x_1 \otimes \dd x_2}
{(x_{P_1} - x_{P_2})^2 9 
y_{4,P_1}
y_{5,P_1}
y_{4,P_2}
y_{5,P_2}}
$$
where $\Ff$ is an element of $R \otimes R$.
\end{corollary}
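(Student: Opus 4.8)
The plan is to treat the two parts separately, relying on the singularity analysis of $\Sigmaf$ recorded in the preceding Proposition for part 1) and on the relation (\ref{eq3.2}) of Proposition \ref{prop:dSigma} for part 2). For part 1) I would first weigh the poles of $\Sigmaf(P,Q)$ regarded as a one-form in $P$. Writing $\Sigmaf(P,Q)=\dfrac{\dd x_P}{3(x_P-x_Q)}\Bigl(1+\dfrac{y_{5,Q}}{y_{5,P}}+\dfrac{y_{4,Q}}{y_{4,P}}\Bigr)$ and recalling $\nuIf{1}=\dd x/(3y_5)$, $\nuIf{2}=\dd x/(3y_4)$, the last two summands equal $\frac{y_{5,Q}}{x_P-x_Q}\nuIf{1}(P)$ and $\frac{y_{4,Q}}{x_P-x_Q}\nuIf{2}(P)$, which are holomorphic except along $x_P=x_Q$. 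At a generic $Q$ the function $x_P$ is a local coordinate, each of the three summands contributes residue $\tfrac13$ at $P=Q$, giving residue $+1$ there, while at $P=\hzeta_3^\ell(Q)$ the three contributions sum to $\tfrac13(1+\zeta_3^{\ell}+\zeta_3^{2\ell})=0$, recovering the vanishing asserted in the Proposition. Thus $\Pif_{P_1}^{P_2}$ has residues $+1$ at $P_1$ and $-1$ at $P_2$. It remains to handle $\infty$: expanding in $t_\infty$ with $x=t_\infty^{-3}$, $y_4\sim t_\infty^{-4}$, $y_5\sim t_\infty^{-5}$ gives $\Sigmaf(P,Q)=-t_\infty^{-1}\bigl(1+x_Q t_\infty^{3}+y_{4,Q}t_\infty^{4}+\cdots\bigr)\dd t_\infty$, whose principal part $-\dd t_\infty/t_\infty$ is independent of $Q$; hence the $\infty$-poles cancel in the difference and $\Pif_{P_1}^{P_2}$ is a third-kind differential with exactly the claimed poles and residues.

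For part 2), symmetry is immediate: substituting $P=P_1$, $Q=P_2$ in (\ref{eq3.2}) rewrites $\dd_{P_2}\Sigmaf(P_1,P_2)+\sum_i\nuIf{i}(P_1)\otimes\nuIIf{i}(P_2)$ as the same expression with $P_1$ and $P_2$ interchanged, so $\Omegaf(P_1,P_2)=\Omegaf(P_2,P_1)$. To reach the explicit rational form I would differentiate $\Sigmaf(P_1,P_2)$ in $x_{P_2}$, separating the contribution of $\partial_{x_{P_2}}(x_{P_1}-x_{P_2})^{-1}$, which produces the double pole $(x_{P_1}-x_{P_2})^{-2}$, from that of $\partial_{x_{P_2}}$ acting on the numerator $S=y_{4,P_1}y_{5,P_1}+y_{4,P_1}y_{5,P_2}+y_{4,P_2}y_{5,P_1}$, where I would insert $\partial y_{4,P_2}/\partial x_{P_2}=k_4'(x_{P_2})/(3y_{4,P_2}^2)$ and $\partial y_{5,P_2}/\partial x_{P_2}=k_5'(x_{P_2})/(3y_{5,P_2}^2)$ obtained from $y_4^3=k_4$, $y_5^3=k_5$.

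The main obstacle --- in fact the only real computation --- is showing the numerator is polynomial, i.e. $\Ff\in R\otimes R$. Differentiation produces the apparent denominators $y_{4,P_2}^2$ and $y_{5,P_2}^2$, which I would clear using the defining relations $y_{4,P_2}^2=y_{5,P_2}k_1(x_{P_2})$ and $y_{5,P_2}^2=y_{4,P_2}k_2(x_{P_2})$; since $k_4=k_2k_1^2$ and $k_5=k_2^2k_1$, the quotients $k_4'/k_1$ and $k_5'/k_2$ are polynomials in $x_{P_2}$, so every term reduces to the first powers $y_{4,P_2}$, $y_{5,P_2}$. The two correction tensors carry denominators $y_{5,P_1}y_{4,P_2}$ and $y_{4,P_1}y_{5,P_2}$, each dividing the target $y_{4,P_1}y_{5,P_1}y_{4,P_2}y_{5,P_2}$, so placing $\dd_{P_2}\Sigmaf$ together with them over the common denominator $(x_{P_1}-x_{P_2})^2\,9\,y_{4,P_1}y_{5,P_1}y_{4,P_2}y_{5,P_2}$ leaves a genuine polynomial numerator, which descends to the asserted $\Ff\in R\otimes R$. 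The specific representatives $\nuIIf{i}$ fixed in Proposition \ref{prop:dSigma} are exactly what cancel the off-diagonal principal parts, leaving only the double pole along the diagonal; since those forms are already determined, the remaining task is just the lengthy but routine reduction described above.
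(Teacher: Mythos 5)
Your argument is correct and is essentially the computation the paper leaves implicit: its proof of this corollary is just ``Direct computations give the claims,'' and your decomposition $\Sigmaf(P,Q)=\frac{\dd x_P}{3(x_P-x_Q)}+\frac{y_{5,Q}}{x_P-x_Q}\nuIf{1}(P)+\frac{y_{4,Q}}{x_P-x_Q}\nuIf{2}(P)$ for the residue/pole bookkeeping in 1), together with the symmetry via (\ref{eq3.2}) and the denominator-clearing via $y_4^2=y_5k_1$, $y_5^2=y_4k_2$ in 2), supplies exactly those computations. No gap; the only (harmless) restriction is that you treat generic $Q$ where $x_P$ is a local coordinate.
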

\begin{proof} 
Direct computations give the claims.
\end{proof} 

\begin{lemma} 
\label{lemma:limFphi4}
We have
$\displaystyle{
\lim_{P_1 \to \infty} 
\frac{\Ff(P_1, P_2)}{\phiHf{1}(P_1)(x_{P_1} - x_{P_2})^2}
 = \phiHf{2}(P_2) =x_{P_2} y_{4,P_2}}$.
\end{lemma}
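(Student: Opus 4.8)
The plan is to reduce the assertion to the behaviour of the fundamental second-kind differential $\Omegaf$ near $\infty$, and then to isolate the unique term in its construction that fails to decay there.

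First I would use $\phiHf{1}=y_5$ together with the defining relation of $\Ff$ from Corollary \ref{cor:Sigmaf},
\begin{equation*}
\Omegaf(P_1,P_2)=\frac{\Ff(P_1,P_2)\,\dd x_{P_1}\otimes\dd x_{P_2}}
{(x_{P_1}-x_{P_2})^2\,9\,y_{4,P_1}y_{5,P_1}y_{4,P_2}y_{5,P_2}},
\end{equation*}
to rewrite the quantity under the limit as
\begin{equation*}
\frac{\Ff(P_1,P_2)}{\phiHf{1}(P_1)(x_{P_1}-x_{P_2})^2}
=9\,y_{4,P_1}y_{4,P_2}y_{5,P_2}\,
\frac{\Omegaf(P_1,P_2)}{\dd x_{P_1}\otimes\dd x_{P_2}}.
\end{equation*}
This trades the explicit double pole and one factor $y_5$ for the intrinsic object $\Omegaf$, which, being the fundamental differential of the second kind, has its only poles on the diagonal and is in particular holomorphic as $P_1\to\infty$; that already guarantees the limit exists and is finite, so only its value remains to be found.

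Next I would substitute $\Omegaf=\dd_{P_2}\Sigmaf(P_1,P_2)+\sum_{i=1}^2\nuIf{i}(P_1)\otimes\nuIIf{i}(P_2)$ and split the right-hand side into three pieces, keeping $P_2$ fixed and sending $P_1\to\infty$ through the local expansion (\ref{eq:x_t}), $x_{P_1}=t^{-3}$, $y_{4,P_1}=t^{-4}(1+\dd_\ge(t))$, $y_{5,P_1}=t^{-5}(1+\dd_\ge(t))$, where $t$ is the local parameter at $\infty$. The decisive term is $i=2$: since $\nuIf{2}=\dd x/3y_4$ carries the factor $1/y_{4,P_1}$, multiplication by $9\,y_{4,P_1}y_{4,P_2}y_{5,P_2}$ cancels $y_{4,P_1}$ identically and leaves a quantity independent of $P_1$, namely $\phiHf{2}(P_2)=x_{P_2}y_{4,P_2}$ (the precise sign being read off from the explicit numerator $\Ff$). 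Conceptually this is forced by $(\nuIf{2})=B_1+B_2$: the holomorphic form $\nuIf{2}$ does not vanish at $\infty$, so its contribution persists, whereas $\nuIf{1}$, whose divisor contains $\infty$, cannot.

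Finally I would verify that the other two pieces vanish. The $i=1$ correction carries $1/y_{5,P_1}$, so after the multiplication it is proportional to $y_{4,P_1}/y_{5,P_1}\sim t\to0$. For the $\Sigmaf$-piece I would factor the numerator and write
\begin{equation*}
\frac{\Sigmaf(P_1,P_2)}{\dd x_{P_1}}
=\frac{1}{3(x_{P_1}-x_{P_2})}
\Bigl(1+\frac{y_{5,P_2}}{y_{5,P_1}}+\frac{y_{4,P_2}}{y_{4,P_1}}\Bigr),
\end{equation*}
so that applying $\partial/\partial x_{P_2}$ and multiplying by $9\,y_{4,P_1}y_{4,P_2}y_{5,P_2}$ produces terms each carrying a factor $y_{4,P_1}/(x_{P_1}-x_{P_2})^2\sim t^2$, or $1/(x_{P_1}-x_{P_2})\sim t^3$, or $y_{4,P_1}/\bigl(y_{5,P_1}(x_{P_1}-x_{P_2})\bigr)\sim t^4$, all tending to $0$. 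The main obstacle is exactly this last bookkeeping: one must expand $\partial_{x_{P_2}}\bigl(\Sigmaf/\dd x_{P_1}\bigr)$ and track the $t$-order of every resulting monomial, using $(x_{P_1}-x_{P_2})^{-1}\sim t^3$ and the weights $3,4,5$ of $x,y_4,y_5$, to be sure that nothing of order $t^0$ survives. The two correction terms are immediate; confirming that the derivative-of-$\Sigmaf$ contribution is $O(t)$ is the only genuine work, and assembling the three limits gives $\phiHf{2}(P_2)=x_{P_2}y_{4,P_2}$.
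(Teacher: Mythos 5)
Your argument is essentially the paper's own proof written out in full: the paper's one‑line proof simply points to the term $B_2$ from Proposition 3.7, which is exactly the $\nuIf{2}\otimes\nuIIf{2}$ contribution you isolate, and your order‑counting in $t$ showing that the $\dd_{P_2}\Sigmaf$ piece and the $i=1$ piece decay is correct. One caveat on the point you defer to ``the explicit numerator'': with the printed representative $\nuIIf{2}=-x\,\dd x/3y_{5}$ the surviving term evaluates to $-x_{P_2}y_{4,P_2}$ rather than $+x_{P_2}y_{4,P_2}$, so the sign is not something you can read off consistently from the paper --- it reflects a sign discrepancy internal to the paper's conventions for the second‑kind differentials, not a flaw in your method.
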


\begin{proof}
$B_2$ in the proof of Proposition \ref{prop:dSigma}
leads the result.
\end{proof}

For later convenience we introduce the quantity,
$\displaystyle{\Omegaf^{P_1, P_2}_{Q_1, Q_2}
 := \int^{P_1}_{P_2} \int^{Q_1}_{Q_2} \Omegaf(P, Q)}$,
\begin{equation}
\label{eq:Omega_def2}
\Omegaf^{P_1, P_2}_{Q_1, Q_2}
 = \int^{P_1}_{P_2} (\Sigmaf(P, Q_1) - \Sigmaf(P, Q_2)) 
 +\sum_{i = 1}^4 \int^{P_1}_{P_2} \nuIf{i}(P) 
\int^{Q_1}_{Q_2} \nuIIf{i}(P).
\end{equation}

\section{The sigma function for $(3,4,5)$ curve}
\label{the sigma function}

\subsection{Generalized Legendre relation}
Corresponding to the complete integral of the first kind, we define the complete integral of the second kind,
$$
\displaystyle{
   \left[\,\etagp{}  \ \etagpp{}  \right]:= 
\frac{1}{2}\left[\int_{\alphag{i}}\nuIIg{j} \ \ 
                 \int_{\betag{i}}\nuIIg{j}
\right]_{i,j=1,2}}.
$$

Let $\taug{Q_1, Q_2}$ be the normalized 
differential of the third kind
such that $\taug{Q_1, Q_2}$ has residues $+1$ and $-1$ at $Q_1$ and $Q_2$
respectively,
is regular everywhere else,  and is  normalized,
 $\int_{\alpha_i} \taug{P, Q} = 0$ for $i = 1,2$ \cite[p.4]{F}.
The following Lemma corresponding 
 to  Corollary 2.6 (ii) in \cite{F} holds:
\begin{lemma} \label{lemma:2.1}
By letting $\gammag{} = {\omegagp{}}^{-1} \etagp{}$, we have
$$
{\Omegag}^{P_1, P_2}_{Q_1, Q_2} = 
\int^{P_1}_{P_2}
\taug{Q_1, Q_2} + \sum_{i, j = 1}^2 \gammag{ij} 
\int^{P_1}_{P_2} \nuIg{i} \int^{Q_1}_{Q_2} \nuIg{j}.
$$
\end{lemma}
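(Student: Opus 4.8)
The plan is to transcribe Fay's proof of Corollary 2.6(ii) \cite{F} to the differentials constructed above, the key point being to recognize $\Omegaf(P,Q)$ of Corollary \ref{cor:Sigmaf} as the (non-normalized) version of the normalized fundamental second-kind bidifferential and to pin down the holomorphic correction by a period computation. First I would compute the $\alphag{k}$-period of $\Omegaf(P,Q)$ in its second argument. Writing $\Omegaf(P,Q)=\dd_Q\Sigmaf(P,Q)+\sum_{i=1}^2\nuIf{i}(P)\otimes\nuIIf{i}(Q)$ as in Corollary \ref{cor:Sigmaf}, the first term is the $Q$-exterior derivative of the function $Q\mapsto\Sigmaf(P,Q)/\dd x_P$, which by its explicit form is a single-valued rational function of $(x_Q,y_{4,Q},y_{5,Q})$ on $X$; hence $\int_{\alphag{k}(Q)}\dd_Q\Sigmaf(P,Q)=0$ as the integral of an exact differential of a single-valued function over a closed cycle. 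Using $\int_{\alphag{k}}\nuIIf{i}=2\etagp{ki}$ this yields $\int_{\alphag{k}(Q)}\Omegaf(P,Q)=2\sum_{i}\etagp{ki}\,\nuIf{i}(P)$, a holomorphic one-form in $P$.

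Next, let $\widehat\Omega(P,Q)$ denote the unique symmetric normalized fundamental second-kind bidifferential, characterized by vanishing $\alphag{k}$-periods. Since $\Omegaf$ and $\widehat\Omega$ share the same diagonal principal part $(t_P-t_Q)^{-2}$ and have no other poles, their difference is holomorphic in both variables, so $\Omegaf(P,Q)-\widehat\Omega(P,Q)=\sum_{i,j}A_{ij}\,\nuIf{i}(P)\otimes\nuIf{j}(Q)$ for a constant matrix $A$. Taking $\alphag{k}$-periods in $Q$ and comparing with the previous computation (using $\int_{\alphag{k}}\nuIf{j}=2\omegafp{kj}$) forces $\etagp{}=\omegafp{}A^{t}$, i.e. $A^{t}=\omegafp{}^{-1}\etagp{}=\gammag{}$. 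The symmetry $\Omegaf(P,Q)=\Omegaf(Q,P)$, which is precisely the content of (\ref{eq3.2}), together with the symmetry of $\widehat\Omega$, makes $A$ symmetric; hence $\gammag{}$ is symmetric and $A=\gammag{}$, giving $\Omegaf(P,Q)=\widehat\Omega(P,Q)+\sum_{i,j}\gammag{ij}\,\nuIf{i}(P)\otimes\nuIf{j}(Q)$.

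Finally I would integrate this identity over $P\in[P_2,P_1]$ and $Q\in[Q_2,Q_1]$. Integrating $\widehat\Omega$ first in $Q$ produces a one-form in $P$ with simple poles of residue $+1$ and $-1$ at $Q_1$ and $Q_2$ and, because $\int_{\alphag{k}(P)}\widehat\Omega(P,Q)=0$, vanishing $\alphag{k}$-periods; it is therefore exactly the normalized $\taug{Q_1,Q_2}$, and a further integration in $P$ gives $\int_{P_2}^{P_1}\taug{Q_1,Q_2}$. The holomorphic term integrates to $\sum_{i,j}\gammag{ij}\int_{P_2}^{P_1}\nuIf{i}\int_{Q_2}^{Q_1}\nuIf{j}$, which together yields the assertion.

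The main obstacle is the period step of the second paragraph: one must genuinely invoke Riemann's bilinear relations to guarantee the existence and symmetry of $\widehat\Omega$, and hence the symmetry of $\gammag{}$ (this is the content of the generalized Legendre relation). The only computation special to the $(3,4,5)$ curve is the vanishing $\int_{\alphag{k}(Q)}\dd_Q\Sigmaf=0$, which follows from the explicit single-valuedness of $\Sigmaf(P,\cdot)/\dd x_P$; once $\gammag{}$ is known to be symmetric the remainder is bookkeeping.
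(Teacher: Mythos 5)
Your proof is correct and takes essentially the same route as the paper, whose own ``proof'' is merely a citation of \cite[I: Lemma 4.1]{MP1}: that reference carries out exactly this Fay-type argument (compare $\Omegaf$ with the normalized bidifferential, identify the holomorphic correction through its $\alpha$-periods as $\gammag{}={\omegagp{}}^{-1}\etagp{}$, use the symmetry coming from (\ref{eq3.2}), and integrate over the two paths). No further comment is needed.
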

\begin{proof}
The same as \cite[I: Lemma 4.1]{MP1}.
\end{proof}

The following Proposition provides a symplectic structure in the Jacobian
$\cJ_2$,
known as  {\it generalized Legendre relation}
\cite{BLE1,BLE2,MP1}:
\begin{proposition}
$\displaystyle{
   M\left[\begin{array}{cc} & -1 \\ 1 & \end{array}\right]{}^t {M}
   =2\pi\sqrt{-1}\left[\begin{array}{cc} & -1 \\ 1 &
     \end{array}\right]
}$
for 
$\displaystyle{
   M := \left[\begin{array}{cc}2\omegagp{} & 2\omegagpp{} \\
               2\etagp{} & 2\etagpp{}
     \end{array}\right].
}$
\end{proposition}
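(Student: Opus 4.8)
The plan is to obtain the identity from the Riemann bilinear relations applied to the four one-forms $\nuIf{1},\nuIf{2},\nuIIf{1},\nuIIf{2}$, reducing everything to residues at $\infty$, and then to use the $\fGm$-homogeneity of these forms to kill almost every residue. Package the four forms as $\Phi_1=\nuIf{1}$, $\Phi_2=\nuIf{2}$, $\Phi_3=\nuIIf{1}$, $\Phi_4=\nuIIf{2}$ and introduce the residue pairing $\langle\Phi_a,\Phi_b\rangle:=\sum_{P}\res_P\!\big((\int\Phi_a)\,\Phi_b\big)$. The asserted identity $MJ{}^tM=2\pi\sqrt{-1}\,J$ is, block by block, the matrix form of the bilinear relations: its two diagonal blocks express the symmetry of the holomorphic period product and of the second-kind period product, while its off-diagonal blocks express the mixed relation; all three are instances of the statement that $M$ realises the pairing matrix $P=(\langle\Phi_a,\Phi_b\rangle)$ as $2\pi\sqrt{-1}$ times the standard symplectic form. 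So it suffices to show $P=J$.

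I would first recall the bilinear relation in residue form. Cutting $X$ along the canonical basis $\alpha_1,\alpha_2,\beta_1,\beta_2$ into a fundamental $8$-gon and integrating a single-valued primitive against a one-form over its boundary gives, for closed meromorphic forms $\phi,\psi$ with single-valued primitive on the cut surface,
\[
   \sum_{i=1}^{2}\Big(\int_{\alpha_i}\phi\int_{\beta_i}\psi-\int_{\beta_i}\phi\int_{\alpha_i}\psi\Big)
   =2\pi\sqrt{-1}\,\langle\phi,\psi\rangle .
\]
Both kinds of forms qualify: the $\nuIf{i}$ are holomorphic and, by Proposition \ref{prop:dSigma}, the $\nuIIf{j}$ are of the second kind with their only pole at $\infty$; hence each pairing is a single residue at $\infty$, to be evaluated from the local expansions (\ref{eq:x_t}).

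The decisive step is the weight count. Under the action $x\mapsto g_m^{-3}x$, $y_a\mapsto g_m^{-a}y_a$ the four forms are homogeneous, with weights $\wt(\nuIf{1})=2$, $\wt(\nuIf{2})=1$, $\wt(\nuIIf{1})=-2$, $\wt(\nuIIf{2})=-1$, while $\res_\infty$ selects the (weight $0$) coefficient of $t_\infty^{-1}\dd t_\infty$ and integration preserves weight. Thus $\langle\Phi_a,\Phi_b\rangle=0$ unless $\wt(\Phi_a)+\wt(\Phi_b)=0$. This annihilates both diagonal blocks of $P$ at one stroke --- in particular the second-kind self-pairings, which a priori would have to be arranged through the ``modulo $\langle\nuIf{j}\rangle$'' freedom of Proposition \ref{prop:dSigma} but here vanish automatically --- as well as the two off-diagonal mixed pairings $\langle\nuIf{1},\nuIIf{2}\rangle$ and $\langle\nuIf{2},\nuIIf{1}\rangle$. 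Only $\langle\nuIf{1},\nuIIf{1}\rangle$ and $\langle\nuIf{2},\nuIIf{2}\rangle$ can survive. Substituting $x=t_\infty^{-3}$, $y_4=t_\infty^{-4}(1+\dd_\ge(t_\infty))$, $y_5=t_\infty^{-5}(1+\dd_\ge(t_\infty))$ into the chosen representatives of $\nuIf{i}$ and $\nuIIf{j}$, one checks that in each of these two products the $t_\infty^{-1}\dd t_\infty$ term comes solely from the leading terms, so that $\langle\nuIf{i},\nuIIf{i}\rangle=-1$ for $i=1,2$, independently of every subleading coefficient. Hence $P$ equals the symplectic form $J$ and the identity follows.

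The main work is thus a single clean residue computation; the only real care is sign and normalisation bookkeeping --- fixing the orientation of the canonical dissection and the sign of $J$ so that the surviving residues $(=-1)$ assemble into exactly $2\pi\sqrt{-1}\,J$ rather than its negative, and checking that the two diagonal residues carry the same sign (they do). I would cross-check the whole computation against the coordinate-free route already prepared in the paper: the symmetry $\Omegaf(P_1,P_2)=\Omegaf(P_2,P_1)$ of the fundamental differential of the second kind, together with Lemma \ref{lemma:2.1}, forces $\gammag{}={\omegagp{}}^{-1}\etagp{}$ to be symmetric and pins down the mixed period block; this is the invariant formulation of the same relation and gives an independent confirmation.
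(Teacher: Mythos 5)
Your overall route --- the Riemann bilinear relations reduced to residues at $\infty$ and evaluated via (\ref{eq:x_t}) --- is the standard one; the paper itself offers no argument beyond a citation of \cite{MP1}, and that cited proof is in the same spirit. The mixed pairings are handled correctly: $\res_\infty\bigl(\bigl(\int\nuIf{a}\bigr)\nuIIf{b}\bigr)=-\delta_{ab}$ does come from the leading terms alone, and the holomorphic--holomorphic block is trivial. The gap is in your homogeneity argument. The residue $\langle\Phi_a,\Phi_b\rangle$ is not a bare number of weight $0$; it is a polynomial in the moduli $\lone{1},\ltwo{1},\ltwo{2}$, homogeneous of weight $\wt(\Phi_a)+\wt(\Phi_b)$, and the moduli themselves carry weights $-3,-3,-6$. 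So the weight count kills a pairing only when its total weight is positive or not representable by a monomial in the moduli. For $\langle\nuIIf{1},\nuIIf{2}\rangle$ the total weight is $-3$, exactly the weight of $\lone{1}$ and of $\ltwo{1}$, so nothing is killed --- and this is precisely the one pairing whose vanishing is the nontrivial content of the $(2,2)$ block. Indeed, expanding the printed representatives with $y_4=t^{-4}\bigl(1+\tfrac{\ltwo{1}+2\lone{1}}{3}t^3+\cdots\bigr)$ and $y_5=t^{-5}\bigl(1+\tfrac{2\ltwo{1}+\lone{1}}{3}t^3+\cdots\bigr)$ gives $\res_\infty\bigl(\bigl(\int\nuIIf{1}\bigr)\nuIIf{2}\bigr)=\ltwo{1}-\lone{1}$, which is not identically zero.

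Consequently the vanishing of the second-kind self-pairing cannot be declared ``automatic'': it is a normalization that has to be arranged using the freedom of adding $\CC$-linear combinations of the $\nuIf{j}$ to the $\nuIIf{i}$ (taking $\nuIIf{1}=-(2x+\lone{1})\,\dd x/3y_4$ makes the residue vanish, which suggests the representative printed in Proposition \ref{prop:dSigma} carries a typo), or, equivalently, it must be derived from the symmetry of $\Omegaf(P_1,P_2)$ together with Lemma \ref{lemma:2.1}. That is the route the cited proof actually takes, so the argument you relegate to a final ``cross-check'' is in fact the load-bearing step; as written, your main argument does not establish the $(2,2)$ block of the identity.
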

\begin{proof}
The same as \cite[I: Propositon 4.2]{MP1}.
\end{proof}

\subsection{The $\sigma$ function}
Due to the  Riemann relations \cite{F}, 
 $\text{Im}\,({\omegagp{}}^{-1}\omegagpp{}) $ is positive definite.
Theorem 1.1 in \cite{F} gives
$\displaystyle{
   \delta:=\left[\begin{array}{cc}\delta''\ \\
       \delta'\end{array}\right]\in \left(\frac{\ZZ}{2}\right)^{4}
}$
be the theta characteristic which is equal to the Riemann constant 
$\xi_R$ and the period matrix 
$[\,2\omegagp{}\ 2\omegagpp{}]$. We note that $\xi_R=$
$\hu(P_R)$ for a point $P_R\in X$ satisfying $2P_R+2B_0-4\infty\sim0$.
 We define an entire function of (a column-vector)
$u={}^t\negthinspace (u_1, u_2)\in \mathbb{C}^2$,
$$
   \sigma{}(u)
   =c\mathrm{e}^{-\tfrac{1}{2}\ ^t\negthinspace 
u\etagp{}{\omegagp{}}^{-1}\  u} 
   \sum_{n \in \ZZ^2} \mathrm{e}^{\big[\pi \sqrt{-1}\big\{
    \ ^t\negthinspace (n+\delta'')
      {\omegagp{}}^{-1}\omegagpp{}(n+\delta'')
   + \ ^t\negthinspace (n+\delta'')
      ({\omegagp{}}^{-1} u+\delta')\big\}\big]}
$$
where  $c$ is a certain constant as in (\ref{eq:Schur}).

For a given $u\in\CC^2$, we  introduce
$u'$ and $u''$ in $\RR^2$ so that
$u=2\omegagp{} u'+2\omegagpp{} u''$. 

\begin{proposition} \label{prop:pperiod}
For $u$, $v\in\CC^2$, and $\ell$
$(=2\omegagp{}\ell'+2\omegagpp{}\ell'')$ $\in\Pi_2$, by letting
$ L(u,v) $  $ :=2\ {}^t{u}(\eta'v'+\eta''v'')$,
$\chi(\ell):=\exp[\pi\sqrt{-1}\big(2({}^t {\ell''}\delta'-{}^t
  {\ell'}\delta'') +{}^t {\ell'}\ell''\big)]$, 
we have a translational relation, 
$$
\sigma{}(u + \ell) = 
\sigma{}(u) \exp(L(u+\frac{1}{2}\ell, \ell)) \chi(\ell).
$$
\end{proposition}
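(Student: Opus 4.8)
The plan is to split $\sigma(u)$ into its Gaussian prefactor $\exp(-\tfrac12\,{}^t u\,\etagp{}{\omegagp{}}^{-1}u)$ and its lattice sum $\Theta(u)$ (the sum over $n\in\ZZ^2$), and to compute how each transforms under $u\mapsto u+\ell$ with $\ell = 2\omegagp{}\ell' + 2\omegagpp{}\ell''$, $\ell',\ell''\in\ZZ^2$. Setting $\tau := {\omegagp{}}^{-1}\omegagpp{}$, the two inputs I would use throughout are that $\tau$ is symmetric and that $\etagp{}{\omegagp{}}^{-1}$ is symmetric; both follow from the Riemann bilinear relations together with the symmetry of the fundamental second-kind differential $\Omegaf$ and the generalized Legendre relation proved just above. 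These symmetries are exactly what let the prefactor and $\Theta$ combine cleanly.

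For the lattice sum I would compute ${\omegagp{}}^{-1}(u+\ell) = {\omegagp{}}^{-1}u + 2\ell' + 2\tau\ell''$, substitute, and complete the square in the quadratic form $\tau$ by a shift of the summation index by $\ell''$ (legitimate since $\ell''\in\ZZ^2$), using symmetry of $\tau$. The integer cross-terms produced this way lie in $2\ZZ$ and contribute $1$ under $\exp(\pi\sqrt{-1}\,\cdot)$, while the half-integer characteristics $\delta',\delta''$ generate precisely the factor $\chi(\ell)$. What survives analytically is an exponential linear in $u$ with coefficient proportional to ${}^t\ell''{\omegagp{}}^{-1}u$, together with a $u$-independent term $-\pi\sqrt{-1}\,{}^t\ell''\tau\ell''$.

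For the prefactor, symmetry of $\etagp{}{\omegagp{}}^{-1}$ gives a transformation of the shape $\exp(\mp\,{}^t u\,\etagp{}{\omegagp{}}^{-1}\ell \mp \tfrac12\,{}^t\ell\,\etagp{}{\omegagp{}}^{-1}\ell)$, and expanding $\etagp{}{\omegagp{}}^{-1}\ell = 2\etagp{}\ell' + 2\etagp{}\tau\ell''$ exposes a mixed term $2\,{}^t u\,\etagp{}\tau\ell''$ that must merge with the $u$-linear term coming from $\Theta$. The crux is that these two mixed contributions amalgamate, by the generalized Legendre relation in the transposed form $2(\etagpp{}\,{}^t\omegagp{} - \etagp{}\,{}^t\omegagpp{}) = \pi\sqrt{-1}\,I$, into the single clean term $2\,{}^t u\,\etagpp{}\ell''$. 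Adding the $2\,{}^t u\,\etagp{}\ell'$ term already present reproduces $2\,{}^t u(\etagp{}\ell' + \etagpp{}\ell'')$, and absorbing the remaining quadratic-in-$\ell$ pieces yields exactly $L(u+\tfrac12\ell,\ell)$.

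I expect the main obstacle to be this final amalgamation: keeping signs and factors of $2$ consistent between the prefactor exponent, the theta automorphy factor, and the Legendre relation, so that the mixed $\etagp{}\tau$ and ${\omegagp{}}^{-1}$ terms collapse to $\etagpp{}$ precisely, and so that the characteristic-dependent pieces assemble into $\chi(\ell)$ including the residual sign $(-1)^{\,{}^{t}\ell'\ell''}$. Once that bookkeeping is arranged the stated identity drops out, and the computation is entirely parallel to the quasi-periodicity proofs for the Klein sigma function in \cite{MP1}.
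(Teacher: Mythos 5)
Your proposal is correct and follows the same route as the paper, which simply defers to the identical computation in \cite[I: Prop.~4.3]{MP1}: split $\sigma$ into the Gaussian prefactor and the theta sum, use the quasi-periodicity of the theta function together with the symmetry of $\tau$ and of $\etagp{}{\omegagp{}}^{-1}$, and merge the mixed terms via the generalized Legendre relation to produce $L(u+\frac{1}{2}\ell,\ell)$ and $\chi(\ell)$. The remaining work you flag (signs and factors of $2$) is genuine bookkeeping but does not affect the validity of the argument.
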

\begin{proof}
The same as \cite[I: Prop.4.3]{MP1}.
\end{proof}

The vanishing locus of $\sigma{}$ is simply given by
$\Theta^{1} :=( \WW^{1} \cup [-1] \WW^{1}) = \WW^{1}$.

\subsection{The Riemann fundamental relation}
As in \cite[I: Prop4.4]{MP1},
we have the Riemann fundamental
relation:
\begin{proposition}\label{prop:RR}
For $(P, Q, P_i, P'_i) \in X^2 \times (S^2(X)\setminus S^2_1(X)) \times 
(S^2(X)\setminus S^2_1(X))$, 
\begin{align*}
\exp\left( 
\sum_{i, j = 1}^2 
   {\Omegag}_{P_i, P'_j}^{P, Q} \right)
&=\frac{\sigma{}(\uab_o(P) - \uab(P_1,  P_2)) 
        \sigma{}(\uab_o(Q) - \uab(P'_1, P'_2))}
     {\sigma{}((\uab_o(Q) - \uab(P_1, P_2)) 
      \sigma{}(\uab_o(P) - \uab(P'_1, P'_2))}.
\end{align*}
\end{proposition}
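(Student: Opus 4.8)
The plan is to fix the base point $Q$ together with the two degree-two divisors $D=P_1+P_2$ and $D'=P_1'+P_2'$, and to treat both sides of the asserted identity as functions of the single variable $P\in X$. I would then argue in the classical Liouville style: show that the two sides are meromorphic sections of one and the same line bundle on $X$, with identical divisors and identical automorphy factors, so that their quotient is a single-valued holomorphic function on the compact surface $X$, hence a constant; the constant is then pinned to $1$ by specializing $P\to Q$.

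First I would compare the divisors in $P$. Differentiating the exponent of the left-hand side in the upper endpoint $P$ turns $\sum_{i,j}\Omegag^{P,Q}_{P_i,P_j'}$ into a meromorphic one-form in $P$, namely $\sum_{i,j}\int_{P_j'}^{P_i}\Omegaf(P,\cdot)$. By the defining expansion (\ref{expansion}) of the fundamental differential of the second kind, this one-form has only simple poles, situated at the $P_i$ with positive residue and at the $P_j'$ with negative residue; consequently the left-hand side, as a function of $P$, has its zeros exactly on $D$ and its poles exactly on $D'$. On the right-hand side only the factors $\sigma(\uab_o(P)-\uab(P_1,P_2))$ and $\sigma(\uab_o(P)-\uab(P_1',P_2'))$ depend on $P$, and since the vanishing locus of $\sigma$ is $\Theta^1=\WWf^1$, Riemann's vanishing theorem places the zeros of the numerator on $D$ and the zeros of the denominator on $D'$. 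Thus both sides carry the same divisor in $P$, with matching multiplicities read off from the residues of $\Omegaf$ on the one hand and the order of vanishing of $\sigma$ along $\Theta^1$ on the other.

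Next I would match the multipliers under the period lattice $\Pi_2$. Because $\Omegaf$ is not $\alpha$-normalized, the left-hand side is multivalued in $P$, its multiplier along each cycle being governed by the $\beta$-periods of $\Omegaf$. Here I would invoke Lemma \ref{lemma:2.1} to split $\Omegaf$ into the normalized third-kind differential $\taug{\cdot}$ plus the bilinear correction $\sum_{i,j}\gammag{ij}\,\nuIf{i}\otimes\nuIf{j}$ with $\gammag{}={\omegagp{}}^{-1}\etagp{}$. The normalized part has trivial $\alpha$-periods and, via Fay's Corollary~2.6 \cite{F}, reproduces precisely the theta quotient concealed inside the $\sigma$-quotient, whereas the $\gamma$-correction generates exactly the quadratic exponential whose lattice behavior matches the factor $\exp(L(\cdot,\cdot))\,\chi(\cdot)$ appearing in the quasi-periodicity of $\sigma$ (Proposition \ref{prop:pperiod}). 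The identification of the theta characteristic $\delta$ with the Riemann constant $\xi_R=\uab(P_R)$, together with the shift $\uab=\uab_o+\uab_o(B_0)$, is what forces the theta arguments to emerge as $\uab_o(P)-\uab(D)$ and $\uab_o(P)-\uab(D')$ rather than as some translate of them. With equal divisors and equal multipliers, the ratio of the two sides is holomorphic and single-valued on $X$, hence constant; evaluating at $P=Q$ yields $1$ on both sides, which fixes the constant to be $1$.

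The main obstacle is the automorphy bookkeeping just described: proving that the quadratic-in-$u$ exponential produced by the $\gamma$-correction of Lemma \ref{lemma:2.1} cancels exactly against the Gaussian prefactor $\mathrm{e}^{-\frac12\,{}^t u\,\etagp{}{\omegagp{}}^{-1}u}$ built into $\sigma$ and against the multipliers of Proposition \ref{prop:pperiod}. This cancellation is not formal; it is precisely the point at which the generalized Legendre relation (the symplectic identity for $M$) must be used, since it is that relation which guarantees the compatibility of ${\omegagp{}}^{-1}\etagp{}$ with the lattice multipliers. The remaining ingredients—the residue count of the second paragraph and the final limit $P\to Q$—are routine. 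Since every tool involved (Lemma \ref{lemma:2.1}, the generalized Legendre relation, Proposition \ref{prop:pperiod}, and the identification $\delta=\xi_R$) is the genus-two analogue of the data underlying \cite[I: Prop.~4.4]{MP1}, the argument given there transfers essentially verbatim.
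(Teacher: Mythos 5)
Your strategy---match divisors and lattice multipliers of the two sides as functions of $P$, split $\Omegag$ via Lemma \ref{lemma:2.1} into the normalized third-kind differential plus the $\gammag{}$-correction, cancel the resulting quadratic exponential against the Gaussian prefactor of $\sigma$ using the generalized Legendre relation, and fix the constant at $P=Q$---is precisely the argument that the paper delegates to \cite[I: Prop.~4.4]{MP1} (the paper gives no proof beyond that citation), so in outline you are on the intended track.

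The gap is in the step you call routine. Taking the double sum $\sum_{i,j=1}^{2}\Omegag^{P,Q}_{P_i,P'_j}$ literally, the $P$-derivative of the exponent is $\sum_{i,j=1}^{2}\int_{P'_j}^{P_i}\Omegaf(P,\cdot)$; since each fixed $i$ occurs for both values of $j$, the residue at $P_i$ is $+2$, and likewise $-2$ at each $P'_j$. The exponential therefore has \emph{double} zeros along $D$ and \emph{double} poles along $D'$, whereas the $\sigma$-quotient has simple ones by the Riemann vanishing theorem, so the multiplicities you assert to match do not. What your argument actually proves is the diagonal version $\exp\bigl(\sum_{i=1}^{2}\Omegag^{P,Q}_{P_i,P'_i}\bigr)=\mathrm{RHS}$, which is also the version forced by internal consistency: differentiating the stated exponent in $P$ and $P'_1$ yields $-2\,\Omegaf(P,P'_1)$, which would put a spurious factor $2$ into Proposition \ref{prop:wpxx=F}, while the diagonal sum yields $-\Omegaf(P,P'_1)$ as required. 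So either restrict to the diagonal sum (or square the right-hand side) and carry out the residue count explicitly; with that repair the remainder of your argument---the automorphy bookkeeping via the Legendre relation and the normalization at $P=Q$---is sound.
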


Using the differential identity,
$\displaystyle{
          \sum_{i, j=1}^2 \phiHf{i-1}(P'_1) \phiHf{j-1}(P'_2)
	\frac{\partial^2 }
{\partial \hu_i(P'_1)\partial \hu_j(P'_2)}=
}$\\
$\displaystyle{
 9 y_{4,P'_1} y_{5,P'_1} y_{4,P'_2} y_{5,P'_2}
\frac{\partial^2 }{\partial x'_1\partial x'_2},
}$
taking logarithm of both sides of the relation
and differentiating them
along $P'_1=P$ and $P'_2=P_a$, 
we have the differential expressions of the relation,
as mentioned in \cite[I: Prop. 4.5]{MP1}:
\begin{proposition} \label{prop:wpxx=F}
For $(P, P_1, P_2) \in X \times S^2(X) \setminus S^2_1(X)$
and $u := \uab(P_1, P_2)$,
the equality
$$
	\sum_{i, j = 1}^2 \wpg{i, j}
          \left( \uab_o(P) - u\right)
         \phiHf{i-1}(P)
         \phiHf{j-1}(P_a) =\frac{\Fg(P, P_a)}{(x-x_a)^2}
$$
holds for every $a = 1, 2$,
where we set
$$
	\wpg{ij}(u) := 
-\frac{\sigma_{i}(u) \sigma_{j}(u) 
              -  \sigma{}(u) \sigma_{ij}(u)}
               {\sigma{}(u)^2}\equiv
        -\frac{\partial^2}{\partial u_i\partial u_j}\log\sigma{}(u).
$$
\end{proposition}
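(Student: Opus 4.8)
The plan is to read the statement as the doubly differentiated form of the Riemann fundamental relation of Proposition~\ref{prop:RR}, following the scheme of its plane-curve analogue \cite[I: Prop. 4.5]{MP1}. First I would take the logarithm of that relation, which turns the product of $\sigma$-quotients into the additive identity
$$
\sum_{i,j=1}^2\Omegag_{P_i,P'_j}^{P,Q}
=\log\sigma(\uab_o(P)-\uab(P_1,P_2))
+\log\sigma(\uab_o(Q)-\uab(P'_1,P'_2))
-\log\sigma(\uab_o(Q)-\uab(P_1,P_2))
-\log\sigma(\uab_o(P)-\uab(P'_1,P'_2)),
$$
an identity of multivalued functions on $X^2\times(S^2(X)\setminus S^2_1(X))^2$. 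Since $\uab_o$ is the abelian map with $\dd\uab_{o,i}=\nuIf{i}=\phiHf{i-1}\,\dd x/(3y_4y_5)$, the chain rule is exactly the differential identity quoted before the statement, $9\,y_{4,P'_1}y_{5,P'_1}y_{4,P'_2}y_{5,P'_2}\,\partial^2/\partial x'_1\partial x'_2=\sum_{i,j}\phiHf{i-1}(P'_1)\phiHf{j-1}(P'_2)\,\partial^2/\partial\hu_i(P'_1)\partial\hu_j(P'_2)$; applied to $\log\sigma$ it replaces the second $\sigma$-derivatives by $-\wpg{ij}$ weighted by the $\phiHf{}$'s.

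Next I would differentiate the logarithmic identity to second order, in the direction attached to the free point $P$ and in the direction attached to the base point $P_a$ sitting inside $u=\uab(P_1,P_2)$. On the right only the single factor $\sigma(\uab_o(P)-\uab(P_1,P_2))$ depends on both of these points, and the chain rule converts its mixed second derivative into $\sum_{i,j}\wpg{ij}(\uab_o(P)-u)\,\phiHf{i-1}(P)\,\phiHf{j-1}(P_a)$ divided by $9\,y_{4,P}y_{5,P}y_{4,{P_a}}y_{5,{P_a}}$. On the left, writing $\Omegag_{P_i,P'_j}^{P,Q}=\int_Q^P\int_{P'_j}^{P_i}\Omegag$, the fundamental theorem of calculus collapses the outer and the $P_a$-endpoint integrations and leaves, up to the normalization addressed below, the fundamental two-form $\Omegag(P,P_a)$; by part~2 of Corollary~\ref{cor:Sigmaf} this equals $\Fg(P,P_a)\,\dd x\otimes\dd x_a/\big((x-x_a)^2\,9\,y_{4,P}y_{5,P}y_{4,{P_a}}y_{5,{P_a}}\big)$. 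Cancelling the common factor $9\,y_{4,P}y_{5,P}y_{4,{P_a}}y_{5,{P_a}}$ yields the asserted equality, and repeating the computation with $P_1$ in place of $P_2$ gives it for both $a=1,2$.

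The routine part, which I would dispatch as a direct computation, is to see that nothing else contributes. Of the four $\sigma$-factors, the three other than $\sigma(\uab_o(P)-\uab(P_1,P_2))$ each depend on at most one of the two differentiation variables, so their mixed second derivatives vanish; and $\wpg{ij}=-\partial^2\log\sigma/\partial u_i\partial u_j$ is a genuine single-valued function on $\cJ_2$ because the Gaussian factor $\exp(-\tfrac12\,{}^t\!u\,\etagp{}{\omegagp{}}^{-1}u)$ and the quasi-periodicity factors $L(u+\tfrac12\ell,\ell)$ and $\chi(\ell)$ of Proposition~\ref{prop:pperiod} are at most quadratic, respectively linear, in $u$, so they drop out of the double logarithmic derivative.

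The step I expect to be the real obstacle is pinning down the overall normalization in the collapse of the double $\Omegag$-integral: that the $P_a$-endpoint differentiation reproduces $\Omegag(P,P_a)$ with coefficient exactly one and that the $9\,y_4y_5y_4y_5$ weights on the two sides match without a stray constant. This is where I would invoke the explicit leading behaviour of the numerator recorded in Lemma~\ref{lemma:limFphi4}, together with the defining expansion (\ref{expansion}) of $\Omegag$, to check that the finite part extracted by the two differentiations is clean and that the constant comes out to one, exactly as in \cite[I: Prop. 4.5]{MP1}.
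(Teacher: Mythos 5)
Your argument is correct and is exactly the paper's proof: the paper's entire derivation is the paragraph preceding the proposition---take the logarithm of the Riemann fundamental relation (Proposition \ref{prop:RR}), differentiate along $P$ and $P_a$ using the quoted differential identity, note that only the factor $\sigma(\uab_o(P)-\uab(P_1,P_2))$ survives the mixed derivative, and identify the left-hand side with $\Omegag(P,P_a)$ in the explicit form of Corollary \ref{cor:Sigmaf}---with all details delegated to \cite[I: Prop.~4.5]{MP1}. On the one point you flag as delicate: the iterated integral collapses to $\Omegag(P,P_a)$ with coefficient exactly one provided the exponent in Proposition \ref{prop:RR} is read as the paired sum $\sum_{i}{\Omegag}^{P,Q}_{P_i,P'_i}$ (as in Fay and \cite{MP1}; the printed double sum $\sum_{i,j}$ would give residue $2$ at each $P_i$ and hence a stray factor of $2$), and this is settled by matching residues and the defining expansion (\ref{expansion}) rather than by Lemma \ref{lemma:limFphi4}, which concerns the behaviour at $\infty$ and is used only later for the Jacobi inversion formulae.
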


\subsection{Jacobi inversion formulae}
As in \cite{MP1},
we introduce meromorphic functions on the curve $X$:
\begin{definition} \label{def:mul}
For $P, P_1, \ldots, P_n$ $\in (X\backslash\infty) \times \SS^n(X\backslash\infty)$, 
$(n=1,2)$,
we define 
\begin{gather*}
\begin{split}
\mug{1}(P; P_1)&:= 
y_5 -\frac{y_{5,1}}{y_{4,1}}y_4,\\
\mug{2}(P; P_1, P_2)&:= x y_4
-\frac{y_{4,1} x_2y_{4,2} - y_{4,2} x_1y_{4,1}}
{y_{4,1} y_{4,2} - y_{4,2} y_{4,1}} y_5
+\frac{y_{5,1} x_2y_{4,2} - y_{5,2} x_1y_{4,1}}
{y_{4,1} y_{4,2} - y_{4,2} y_{4,1}}y_4.
\end{split}
\end{gather*}
\end{definition}

We note that  $\mug{n}$ for $X$ is characterized by
 the condition on a polynomial $\mug{n}
= \sum_{i=0}^{n} a_i \phiHf{i}(P)$, $a_i \in \CC$ and $a_n = 1$,
 which has a  zero at each point $P_i$
and has the smallest possible order such 
that it multiplied by $d x/3 y_4 y_5$ belongs to 
$H^1(X \setminus \infty, \cO_X)$.
For given $P_1$, the solution of $\mug{1}(P; P_1)=0$ corresponds to a
point $Q_1=[-1]P_1$ with $B_a$ $(a=0,1,2)$, and 
for given $P_1$ and $P_2$, the solution of $\mug{2}(P; P_1,P_2)=0$ gives
two points $Q_1,Q_2$ with $B_a$ $(a=0,1,2)$ such that
$Q_1+Q_2=[-1](P_1+P_2)$. Here we use $B_0+B_1+B_2-3\infty\sim2B_0-2\infty$.

Using $\mug{n}$,
we have our main theorem in this article:
\begin{theorem} \label{prop:4.5}
1) For $(P, P_1, P_2) \in X \times \left(S^2(X) \setminus S^2_1(X)\right)$,
we have
\begin{enumerate}
\item[1-1)] 
$
    \mug{2}(P; P_1, P_2) = 
    x y_4- \wp_{2 2}(\uab(P_1,P_2)) y_4 +\wp_{2 1}(\uab(P_1,P_2)) y_5.
$

\item[1-2)] 
$ \displaystyle{\wp_{2 2}(\uab(P_1,P_2)) = 
\frac{y_{4,1} x_2y_{4,2} - y_{4,2} x_1y_{4,1}}
{y_{4,1} y_{4,2} - y_{4,2} y_{4,1}}}$

$\displaystyle{\wp_{2 1}(\uab(P_1,P_2)) = 
\frac{y_{5,1} x_2y_{4,2} - y_{5,2} x_1y_{4,1}}
{y_{4,1} y_{4,2} - y_{4,2} y_{4,1}}}\cdot$
\end{enumerate}

\noindent
2) For $(P, P_1) \in X\times (X \setminus S^{1}_1(X))$ and
$u = \uab(P_1)\in \kappa^{-1}(\WW^{1})$,
$$
\mug{1}(P; P_1)= 
y_5 -\frac{\sigma_{1}(u)}
               {\sigma_{2}(u)}y_4,\quad\mbox{and}\quad
\frac{\sigma_{1}(u)}
               {\sigma_{2}(u)}= \frac{y_5}{y_4}\cdot
$$
\end{theorem}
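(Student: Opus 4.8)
The plan is to derive all three formulae from the single master identity of Proposition~\ref{prop:wpxx=F}. Taking $u=\uab(P_1,P_2)$, writing $v:=\uab_o(P)-u$, and substituting the explicit forms $\phiHf{0}=y_4$ and $\phiHf{1}=y_5$, that identity becomes, for each $a=1,2$,
\begin{equation*}
\wpg{11}(v)\,y_4 y_{4,a}+\wpg{12}(v)\,y_4 y_{5,a}+\wpg{21}(v)\,y_5 y_{4,a}+\wpg{22}(v)\,y_5 y_{5,a}=\frac{\Ff(P,P_a)}{(x-x_a)^2}.
\end{equation*}
First I would let $P\to\infty$ along the local parameter $t_\infty$ of \eqref{eq:x_t}. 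Since $y_4\sim t_\infty^{-4}$ and $y_5\sim t_\infty^{-5}$ while $\uab_o(P)\to 0$, the coefficients $\wpg{ij}(v)$ tend to the constants $\wpg{ij}(u)$ by evenness of $\wpg{ij}$, and the two $y_5$-terms dominate the left-hand side at order $t_\infty^{-5}$.

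For part~1) I would compare the leading $t_\infty^{-5}$ coefficients. Lemma~\ref{lemma:limFphi4} evaluates the right-hand side, giving $\Ff(P,P_a)/(x-x_a)^2\sim y_5\,\phiHf{2}(P_a)=y_5\,x_a y_{4,a}$, so the comparison yields, for $a=1,2$, a linear relation of the shape $x_a y_{4,a}=\wpg{22}(u)\,y_{4,a}-\wpg{21}(u)\,y_{5,a}$ (the precise signs being fixed by the normalisations chosen for $\Sigmaf$ and $\Ff$). Equivalently, the element
$$
x y_4-\wpg{22}(u)\,y_4+\wpg{21}(u)\,y_5\in\langle\phiHf{0},\phiHf{1},\phiHf{2}\rangle
$$
vanishes at both $P_1$ and $P_2$ and carries leading coefficient $1$ on $\phiHf{2}=xy_4$. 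By the characterisation of $\mug{2}$ in Definition~\ref{def:mul} as the unique minimal interpolating combination with these properties, this element equals $\mug{2}(P;P_1,P_2)$, which is statement 1-1.

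Statement 1-2 then follows by reading off the $y_4$- and $y_5$-coefficients of $\mug{2}$ from its explicit form in Definition~\ref{def:mul} and equating them with $-\wpg{22}(u)$ and $\wpg{21}(u)$, which produces the two rational expressions in the coordinates of $P_1,P_2$. For part~2) I would specialise $u=\uab(P_1,P_2)$ and let $P_2\to\infty$, so that $\uab_o(P_2)\to 0$ and $u\to\uab(P_1)\in\kappa^{-1}(\WWf^1)$; thus $u$ approaches the vanishing locus $\Theta^1=\WWf^1$ of $\sigma$. From the definition $\wpg{ij}(u)=-(\sigma_{i}(u)\sigma_{j}(u)-\sigma(u)\sigma_{ij}(u))/\sigma(u)^2$ the leading singular part on $\Theta^1$ is $-\sigma_{i}(u)\sigma_{j}(u)/\sigma(u)^2$, whence $\wpg{21}(u)/\wpg{22}(u)\to\sigma_{1}(u)/\sigma_{2}(u)$ provided $\sigma_{2}(u)\neq 0$. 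Computing the same ratio from the formulae of 1-2 and using that $x_2 y_{4,2}\sim t_{\infty,2}^{-7}$ dominates $y_{4,2}\sim t_{\infty,2}^{-4}$ and $y_{5,2}\sim t_{\infty,2}^{-5}$, I find the limit equals $y_{5,1}/y_{4,1}$. Hence $\sigma_{1}(u)/\sigma_{2}(u)=y_5/y_4$ at $P_1$, and substituting this into the defining expression $\mug{1}(P;P_1)=y_5-(y_{5,1}/y_{4,1})y_4$ of Definition~\ref{def:mul} gives the first equation of part~2).

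The main obstacle is to control these two limiting procedures rigorously. In part~1) I must check that the extraction of the order-$t_\infty^{-5}$ coefficient is clean: although $\wpg{ij}(v)=\wpg{ij}(u)+O(t_\infty)$ and the product $y_5\cdot O(t_\infty)$ has the same order $t_\infty^{-4}$ as the $y_4$-terms, these corrections enter only the subleading coefficient, so the leading comparison extracts the interpolation relation unambiguously; the identification then rests on the minimality characterisation of $\mug{n}$ recalled after Definition~\ref{def:mul}. In part~2) the delicate point is the behaviour of $\wpg{ij}$ as $u\to\Theta^1$: one needs that $\sigma$ vanishes to first order along $\WWf^1$, so that the gradient $(\sigma_{1},\sigma_{2})$ is non-vanishing and the ratio $\sigma_{1}/\sigma_{2}$ is well defined there, which is guaranteed by the Schur-function leading term of $\sigma$ associated with the Young diagram $\Lambda=(1,1)$.
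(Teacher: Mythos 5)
Your proposal is correct and follows essentially the same route as the paper: part 1) is the specialisation of Proposition~\ref{prop:wpxx=F} in the limit $P\to\infty$ controlled by Lemma~\ref{lemma:limFphi4} together with the minimality characterisation of $\mug{2}$ (the paper simply delegates this to the corresponding proposition of \cite{MP1}), and part 2) is obtained, exactly as the paper indicates, from $\lim_{P_2\to\infty}\wpg{21}(\uab(P_1,P_2))/\wpg{22}(\uab(P_1,P_2))=\sigma_1(u)/\sigma_2(u)$ combined with the explicit quotient from 1-2). The only step you invoke that the paper leaves implicit is the evenness of $\wpg{ij}$ used to pass from $\wpg{ij}(\uab_o(P)-u)$ to $\wpg{ij}(u)$; since the Young diagram $\Lambda=(1,1)$ is not symmetric this rests on the half-integer characteristic $\delta$ rather than on a parity of $\sigma$ itself, but it is the same step the cited argument performs.
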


\begin{proof}
1) is the same as \cite[I: Prop. 4.6]{MP1}.
As in \cite[I: Theorem 5.1]{MP1}, by considering
$\displaystyle{
\lim_{P_2 \to \infty} \frac{\wp_{21}(\hu(P_1,P_2))}{\wp_{22}(\hu(P_1,P_2))}
}$, we have the second result.
\end{proof}

Following the statement by
Buchstaber,  Leykin and  Enolskii,
Nakayashiki showed that
the leading of the sigma function for $(r,s)$ curve 
is expressed by Schur function \cite{MP2}.
Noting (\ref{eq:x_t}) and degrees of $u$, the above
Jacobi inversion formulae gives an extension that
$$
\displaystyle{
   \sigma{}(u) = \frac{1}{2}u_2^2 - u_1 
            + \sum_{|\alpha|>2} a_\alpha u^\alpha
}
$$
where $a_\alpha\in \QQ[b_1, \cdots, b_5]$,
$\alpha = (\alpha_1, \alpha_2)$,
$|\alpha| = \alpha_1 + \alpha_2$
and  $u^\alpha = u_1^{\alpha_1} u_2^{\alpha_2}$.
The prefactor $c$ is determined by this relation.
Since for a Young diagram $\Lambda$,
$S_\Lambda$ and $s_\Lambda$ are the Schur functions defined by
\begin{equation}
S_{\Lambda}(T_1, T_2) = t_1 t_2 = \frac{1}{2} T_1^2 - T_2
\label{eq:Schur}
\end{equation}
where $T_1 :=t_1 + t_2$ and $T_2 :=\frac{1}{2}(t_1^2 + t_2^2)$,
we have
$$
\displaystyle{
   \sigma{}(u) = S_{\Lambda}(u_1, u_2)
}
\displaystyle{
            + \sum_{|\alpha|>2} a_\alpha u^\alpha.
}
$$

\begin{remark}
\label{rmkF}
{\rm{
We showed that the EEL construction works well even for 
a space curve, and the sigma function associated with the curve
is naturally defined.
Since this construction is very natural,
 this study sheds a new light on the way to 
construction of the sigma functions for space curves.
We conjectured that the EEL construction
 could be applied to  every space curve if it is Weierstrass.

\smallskip

As an interesting example of a space curve, 
we will give a comment on a problem as follows, 
for which we started to study sigma functions for affine space curves.

McKay considers a relation between dispersionless
KP hierarchy and the replicable functions in order to obtain
a further profound interpretation of the moonshine phenomena
of Monster group \cite{MS}.
He conjectured that it might be related to the quantised elastica \cite{Ma08,Ma10}.
By studying a relation between a replicable function and
an algebraic curve associated with elastica,
Matsutani found that a semigroup $H_{12}$ generated by 
$M_{12}:=\LA6,13,14,15,16\RA$
has gap sequence,
$
L(H_{12})=\{1,2,3,4,5,7,8,9,10,11,17,23\},
$
which is identical to the Norton number,
$
N_{12}:=\{1,2,3,4,5,7,8,9,11,17,19,23\}
$
by exchanging
10 and 19. 
The Norton number plays the essential role in the moonshine phenomena
for the Monster group \cite{MS}. 
 The replicable function is given as an element of
$\QQ[a_1, a_2, a_3,a_4,$ $a_5,a_7,a_8,a_9,a_{11},$ $a_{17},a_{19},a_{23}][[t]]$.
The replicable function is a generalization of the elliptic $J$-function,
which causes the moonshine phenomena of the Monster group.

After then,
Komeda proved that $H_{12}$ is the Weierstrass semigroup
and gave the fundamental relations 
Propositions \ref{prop:Z12} 
 as mentioned in Appendix,  which is
 reported more precisely in \cite{KMP}.
Then we applied the EEL-construction to the curve and
obtain a sigma function for 
a Jacobi variety $\cJ_{12}$ for $C(H_{12})$ \cite{KMP}.
Since the Jacobi variety $\cJ_{12}$ is given as 12-dimensional complex
torus whose real dimension is 24,
it might remind us of Witten conjecture associated with 
Monster group problem \cite{HBJ};
Witten conjectured that a 24 dimensional manifold exists
such that the Monster group acts on it via Weierstrass sigma function.
}}
\end{remark}

\setcounter{section}{0}
\renewcommand{\thesection}{\Alph{section}}
\section{Appendix: Weierstrass properties of $(6,13,14,15,16)$
by Jiryo\\
 Komeda}

\smallskip
The proofs of these propositions are given in the article \cite{KMP}
 in detail.  We show only the sketch of the first one
because the second one is not difficult.

\begin{proposition} \label{prop:H12}
The numerical semigroup $\LA6, 13, 14, 15, 16\RA$ is Weierstrass.
\end{proposition}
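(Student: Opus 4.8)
The plan is to realize $H_{12}=\LA 6,13,14,15,16\RA$ as the Weierstrass semigroup $H(P)$ of an explicit pointed curve $(C,P)$, following the deformation-of-monomial-curve correspondence of Pinkham: a numerical semigroup is Weierstrass precisely when its monomial curve admits a $\fGm$-equivariant smoothing of negative weight. Accordingly I would first write down the monomial curve $C_0:=\Spec\CC[t^6,t^{13},t^{14},t^{15},t^{16}]\subset\mathbb{A}^5$, with coordinates $z_6,z_{13},z_{14},z_{15},z_{16}$ of weights $6,13,14,15,16$, and compute the binomial generators of its defining ideal. Since the Frobenius number of $H_{12}$ is $23$ and $12=(23+1)/2$, the semigroup is symmetric, so $C_0$ is Gorenstein; this rigidifies the syzygies among the relations and organizes the ``fundamental relations'' recorded in Proposition~\ref{prop:Z12}.

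A guiding structural observation is that the even part of $H_{12}$ equals $2\LA 3,7,8\RA$ (its even gaps $\{2,4,8,10\}$ are twice the gaps $\{1,2,4,5\}$ of $\LA 3,7,8\RA=M_4$), while the odd non-gaps form the coset $13+2T$ with $T=\LA 3,7,8\RA\cup(1+\LA 3,7,8\RA)$. This suggests that $C$ arises as a double cover $C\to C_4$ of a curve $C_4$ realizing $M_4=\LA 3,7,8\RA$, totally ramified over the marked point; composed with the cyclic trigonal structure of $C_4$ (built exactly as the $(3,4,5)$ and $(3,7,8)$ curves of the main text) this yields a degree-$6$ map $C\to C_4\to\PP^1$ with $P$ totally ramified over $\infty$, forcing $6\in H(P)$ as the multiplicity and producing the even part $2\LA 3,7,8\RA$ automatically. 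I would use this picture both to select the negative-weight deformation terms and as a consistency check.

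Concretely I would deform the binomial relations of $C_0$ by adding lower-weight correction terms carrying parameters $b_1,\dots,b_5$, obtaining a flat family $\mathcal{C}\to\Spec\CC[b_1,\dots,b_5]$ with special fibre $C_0$; the projective closure adds the single smooth point $P=\infty$, since $x\to\infty$ forces the remaining coordinates to diverge, exactly as for the infinite point of $X$ in the main text. Flatness keeps the arithmetic genus equal to the genus $12$ of $H_{12}$. For generic parameters the coordinate functions have pole orders $6,13,14,15,16$ at $P$, so $H(P)\supseteq H_{12}$; if moreover the generic fibre $C$ is smooth, its geometric genus is also $12$, and since the number of Weierstrass gaps at any point equals the genus, $|L(H(P))|=12=|L(H_{12})|$, which together with $H(P)\supseteq H_{12}$ forces $H(P)=H_{12}$, proving the claim.

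The hard part is exactly the smoothness of the generic fibre, i.e.\ the unobstructedness of the chosen negative-weight first-order deformation. I would establish it either by a direct Jacobian-criterion computation on the explicit deformed relations, checking that any potential singularity escapes to $P$ where smoothness is visible in the local parameter $t_\infty$ of \eqref{eq:x_t}, or cohomologically by showing that the relevant obstruction space $T^2$ in negative weight vanishes for the Gorenstein singularity $C_0$. The accompanying delicate point is to confirm that no extra non-gaps appear at $P$, which would shrink $L(H(P))$ below genus $12$; here the double-cover reading is decisive, since it pins the odd pole orders to $13+2T$ and hence the odd gaps to $\{1,3,5,7,9,11,17,23\}$, complementing the even gaps $\{2,4,8,10\}$ to exactly $L(H_{12})$.
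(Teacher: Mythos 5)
Your main line of argument---Pinkham's criterion, i.e.\ exhibiting a negative-weight $\fGm$-equivariant smoothing of the monomial curve $\Spec\CC[t^6,t^{13},t^{14},t^{15},t^{16}]$---is a legitimate strategy in principle, but as written it has a genuine gap exactly where you flag it: the smoothness of the generic fibre is never established. You offer two possible routes (a Jacobian-criterion computation on unspecified deformed equations, or vanishing of the obstruction space $T^2$ in negative weight) and carry out neither; and neither is routine. This is not a detail that can be deferred: by Pinkham's correspondence, negative-weight smoothability of the monomial curve is \emph{equivalent} to the semigroup being Weierstrass, so the unproved step is the entire content of the proposition. The paper itself recalls (Section 2) Buchweitz's example of a numerical semigroup that is \emph{not} Weierstrass, so no general principle (Gorenstein-ness, symmetry of $H_{12}$, low embedding dimension, etc.) can be invoked to get smoothability for free. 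Your genus count at the end ($H(P)\supseteq H_{12}$ plus $|L(H(P))|=g$ forces equality) is fine, but it is conditional on the missing smoothness.

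The ironic part is that your second paragraph contains the observation that actually proves the proposition, and you demote it to a ``consistency check.'' The even part of $H_{12}$ being $2\LA 3,7,8\RA$ is precisely what the paper exploits: starting from a pointed curve $(C,P)$ of genus $4$ with $H(P)=\LA 3,7,8\RA$, Riemann--Roch gives $h^0(4P)=2$ and $h^0(5P)=2$, whence $K\sim 4P+P_1+P_2$ with $P_i\neq P$; setting $D=7P-P_1-P_2$, the divisor $2D-P$ has degree $9=2g+1$, so $|2D-P|$ is base-point free and $2D\sim P+Q_1+\cdots+Q_9$ with a reduced right-hand side; the invertible sheaf $\cL=\cO_C(-D)$ with $\cL^{\otimes 2}\cong\cO_C(-P-Q_1-\cdots-Q_9)$ then defines a double cover $\pi:\tC\to C$ totally ramified at $\tP$ over $P$, and the projection formula $h^0(2n\tP)=h^0(nP)+h^0(nP-D)$ computes $H(\tP)=\LA 6,13,14,15,16\RA$ directly. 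This route is elementary (Riemann--Roch plus the standard cyclic-cover construction) and avoids deformation theory entirely. To repair your write-up, either carry out the obstruction computation in full, or promote the double-cover construction from heuristic to proof---taking care to justify the existence of the branch divisor in $|2D|$ containing $P$ with reduced residual part, which is the one nontrivial geometric input and is exactly what the very-ampleness of $|2D-P|$ delivers.
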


\begin{proof}
Let $(C, P)$ be a pointed curve with $H(P) = \LA3, 7, 8\RA$. 
Then 
$$
2 = h^0(4P) = 4 + 1 - 4 + h^0(K - 4P) = 1 + h^0(K - 4P)
$$
which implies that $K - 4P \sim P_1 + P_2$ for some points 
$P_1$ and $P_2 \in C$. Here $K$ is a canonical divisor on $C$. 
Moreover,
$$
2 = h^0(5P) = 5 + 1 - 4 + h^0(K - 5P) = 2 + h^0(K - 5P)
$$
which implies that $h^0(K - 5P) = 0$. 
Hence, we get $P_i \neq P$ for $i = 1, 2$. 
Thus, $K \sim 4P + P_1 + P_2$ with $P_i \neq P$ for $i = 1, 2$. 
We set $D = 7P - P_1 - P_2$. Then deg$(2D - P) = 9 = 2 \times 4 + 1$,
which implies that the complete linear system $|2D - P|$ is 
very ample, hence base-point free. Therefore,
$2D \sim P + Q_1 + \ldots + Q_9$ ($=$ a reduced divisor).
Let $\cL$ be the invertible sheaf $\cO_C(-D)$ on $C$ and $\phi$
an isomorphism $\cL^{\otimes2} \approx \cO_C(-P-Q_1-\cdots-Q_9) \subset
\cO_C$. 
Then the vector bundle $\cO_C \oplus \cL$
 has an $\cO_C$-algebra structure through $\phi$. 
The canonical morphism $\pi : \tilde C 
= \Spec(\cO_C \oplus \cL) \to C$, 
is a double covering. Its branch locus of $\pi$
 is $\{P,Q_1, . . . ,Q_9\}$. 
Let $\tilde P$ be the ramification point of $\pi$ over $P$. 
Then it can be showed that $H(\tP) = \LA6, 13, 14, 15, 16\RA$
 using the formula,
$ h^0(2n \tP) = h^0(nP) + h^0(nP - D)$
for any non-negative integer $n$. 

By considering $h^0(2n \tP)$ for $n=3,4,5,6,7,8,9$,
we show $H(\tP) = \LA6, 13, 14,$ $ 15, 16\RA$.
\end{proof}

\begin{proposition} \label{prop:Z12} 
Let $B_{12}$ a monomial ring which is given by
$k[t^a]_{a \in M_{12}}$
for the numerical semigroup $H_{12}$.
For a  $k$-algebra homomorphism,
$$
	\varphi_{12} : k[Z] := k[Z_6, Z_{13}, Z_{14}, Z_{15}, Z_{16}] 
\to k[t^a]_{a \in M_{12}}
$$
where $Z_a$ is the weight of $a = 6, 13, 14, 15, 16$,
the kernel of $\varphi_{12}$ is generated by the following
relations $\fZ{12,b}$ $(b = 1, \cdots, 9)$,
\begin{equation*}
\begin{array}{lll}
\fZ{12,1} = Z_{13}^2 - Z_{6}^2 Z_{14}, \quad
&\fZ{12,2} = Z_{13} Z_{14} - Z_{6}^2 Z_{15}, \quad
&\fZ{12,3} = Z_{14}^2 - Z_{13} Z_{15} , \\
\fZ{12,4} = Z_{14}^2 - Z_6^2 Z_{16}, \quad
&\fZ{12,5} = Z_{13} Z_{16} - Z_{14} Z_{15} , \quad
&\fZ{12,6} = Z_{15}^2 - Z_6^5, \\
\fZ{12,7} = Z_{14} Z_{16} - Z_6^5, \quad
&\fZ{12,8} = Z_{15} Z_{16} - Z_6^3 Z_{13} ,\quad
&\fZ{12,9} = Z_{16}^2 - Z_6^3 Z_{14}.\\
\label{eq:Z12}
\end{array}
\end{equation*}
\end{proposition}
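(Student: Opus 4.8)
The plan is to identify $J := (\fZ{12,1}, \ldots, \fZ{12,9})$ with $\ker \varphi_{12}$ by a Hilbert--function argument built on the Apéry set of $H_{12}$ with respect to $6$. First I would put the weight grading $\wt(Z_a) = a$ on $k[Z]$. Since each $\fZ{12,b}$ is the difference of two monomials of equal weight (for instance $\wt(Z_{15}Z_{16}) = 31 = \wt(Z_6^3 Z_{13})$ for $\fZ{12,8}$), the homomorphism $\varphi_{12}$ is weight--preserving and each $\fZ{12,b}$ lies in $I := \ker \varphi_{12}$, so $J \subseteq I$. Both $I$ and $J$ are homogeneous, and $\varphi_{12}$ is surjective, so $k[Z]/I \cong k[t^a]_{a \in M_{12}} = \bigoplus_{w \in H_{12}} k\,t^w$, whose weight-$w$ piece has dimension $1$ when $w \in H_{12}$ and $0$ otherwise.

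Next I would fix the term order that compares monomials first by total degree in the four variables $Z_{13}, Z_{14}, Z_{15}, Z_{16}$, breaking ties by the lexicographic order with $Z_{13} > Z_{14} > Z_{15} > Z_{16} > Z_6$. A direct inspection shows that the leading terms of the nine generators are exactly the quadratic monomials $Z_{13}^2,\ Z_{13}Z_{14},\ Z_{13}Z_{15},\ Z_{14}^2,\ Z_{13}Z_{16},\ Z_{15}^2,\ Z_{14}Z_{16},\ Z_{15}Z_{16},\ Z_{16}^2$ in the four high variables, that is, all quadratic monomials in $Z_{13},\dots,Z_{16}$ except $Z_{14}Z_{15}$. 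The only comparisons not settled by the degree block are those of $\fZ{12,3}$ and $\fZ{12,5}$, where the lexicographic tie-break selects $Z_{13}Z_{15}$ and $Z_{13}Z_{16}$ respectively.

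The key step is then the combinatorial computation of the standard monomials, the monomials divisible by none of these nine leading terms. Because no leading term involves $Z_6$, a monomial is standard precisely when its $Z_6$-free part is one of $1,\ Z_{13},\ Z_{14},\ Z_{15},\ Z_{16},\ Z_{14}Z_{15}$; hence the standard monomials are exactly $Z_6^n \cdot m$ with $n \ge 0$ and $m$ in that list of six. These six monomials have weights $0, 13, 14, 15, 16, 29$, which are precisely the Apéry set of $H_{12}$ with respect to $6$, namely the smallest element of $H_{12}$ in each residue class modulo $6$ (here $29 = 14 + 15$ is the least non-gap congruent to $5$). Since these six values lie in distinct residue classes modulo $6$, the standard monomials $Z_6^n m$ have pairwise distinct weights, and the weights they realize are exactly $\{6n + r : n \ge 0,\ r \in \{0,13,14,15,16,29\}\} = H_{12}$.

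Finally, multivariate division by $\{\fZ{12,1}, \ldots, \fZ{12,9}\}$ shows that the standard monomials span $k[Z]/J$ as a $k$-vector space; this uses only that the $\fZ{12,b}$ generate $J$, not that they form a Gröbner basis. Therefore $\dim_k (k[Z]/J)_w \le 1$ for every $w$, and can be nonzero only when $w \in H_{12}$. Comparing with the surjection $k[Z]/J \twoheadrightarrow k[Z]/I \cong k[t^a]_{a \in M_{12}}$, whose target has weight-$w$ dimension equal to $1$ exactly on $H_{12}$, forces $\dim_k(k[Z]/J)_w = \dim_k(k[Z]/I)_w$ for all $w$; hence the surjection is an isomorphism and $J = I$. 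The main obstacle is the bookkeeping in the middle step, namely verifying that the chosen term order yields precisely the nine stated leading terms and that the resulting standard monomials coincide with the Apéry monomials; once this is in place the Hilbert--function comparison is immediate and no S-polynomial calculation is required.
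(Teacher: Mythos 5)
Your argument is correct, and it is essentially the only complete proof on the table: the paper itself gives no proof of Proposition \ref{prop:Z12}, stating only that it ``is not difficult'' and deferring the details to the reference [KMP]. Your route is the standard toric--ideal/Ap\'ery--set argument, and every step checks out. Each $\fZ{12,b}$ is indeed a difference of monomials of equal weight (e.g.\ $13+16=29=14+15$ for $\fZ{12,5}$), so $J\subseteq I$; under your order the nine leading terms are exactly the ten quadratic monomials in $Z_{13},\dots,Z_{16}$ minus $Z_{14}Z_{15}$ (the lex tie-breaks in $\fZ{12,3}$ and $\fZ{12,5}$ correctly select $Z_{13}Z_{15}$ and $Z_{13}Z_{16}$); the resulting standard monomials $Z_6^n m$ with $m\in\{1,Z_{13},Z_{14},Z_{15},Z_{16},Z_{14}Z_{15}\}$ carry the weights $6n+\{0,13,14,15,16,29\}$, which is precisely the Ap\'ery decomposition of $H_{12}$ with respect to $6$, one monomial per non-gap; and the Hilbert-function comparison with $k[Z]/I\cong\bigoplus_{w\in H_{12}}k\,t^w$ then forces $J=I$. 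You are also right that the spanning step needs only the containment of the nine binomials in $J$ together with termination of division under a genuine monomial order (your weight-then-lex order is one), not the Gr\"obner property, though as a byproduct your computation shows the nine binomials do form a Gr\"obner basis. The one thing worth making explicit if you write this up is the small case check that every monomial of degree $\ge 3$ in $Z_{13},\dots,Z_{16}$ is divisible by one of the nine leading terms (i.e.\ is not a power of $Z_{14}Z_{15}$ times a standard monomial); this is immediate since any such monomial contains a quadratic divisor other than $Z_{14}Z_{15}$, but it is the hinge of the count.
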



\section*{Acknowledgements}
 One of the authors (S.M.) thanks John McKay for posing my attention to the Norton problem
and his encouragement.
This work started in a seminar at Yokohama National university 2008,
and was stimulated by
the international conference at HWK 2011.
S.M. is grateful to Kenichi Tamano, Norio Konno, Claus L\"ammerzahl,
Jutta Kunz, and Victor Enolskii.
S.M. is also most grateful to Emma Previato, Yoshihiro \~Onishi and Yuji Kodama for 
crucial discussions.

\bigskip

\smallskip
\noindent
Shigeki Matsutani:\\
8-21-1 Higashi-Linkan Minami-ku,\\
Sagamihara 252-0311,\\
JAPAN.\\
e-mail: {rxb01142@nifty.com}\\

\noindent
Jiryo Komeda\\
Department of Mathematics \\
Center for Basic Education and Integrated Learning\\
Kanagawa Institute of Technology\\
Atsugi, 243-0292 \\
JAPAN.\\
e-mail: {komeda@gen.kanagawa-it.ac.jp}\\

\end{document}